\newcommand*{\newbibstartnumber}[1]{%
  \apptocmd{\thebibliography}{%
    \global\c@NAT@ctr #1\relax
    \addtocounter{NAT@ctr}{-1}%
  }{}{}%
}
\newcommand{\fun}[1]{\ensuremath{\textsl{#1}}}
\DeclareMathOperator{\alp}{\rm alph}
\DeclareMathOperator{\A}{\mathcal A}
\DeclareMathOperator{\D}{\mathcal D}
\newcommand{\eps}{\varepsilon}
\newtheorem{theorem}{Theorem}
\newtheorem{lemma}[theorem]{Lemma}
\newdefinition{remark}[theorem]{Remark}
\newdefinition{example}[theorem]{Example}
\newdefinition{problem}[theorem]{Problem}
\newtheorem{claim}{Claim}
\newtheorem{proposition}{Proposition}
\newproof{proof}{Proof}
\newtheorem{fact}{Fact}
\journal{ArXiv.org}
\begin{document}
\begin{frontmatter}

\title{On $k$-piecewise testability (preliminary report)\tnoteref{tr}}

\author{Tom\'{a}\v{s} Masopust\fnref{tm}}
\ead{tomas.masopust@tu-dresden.de}

\author{Micha\"el~Thomazo\fnref{mt}}
\ead{michael.thomazo@tu-dresden.de}

\address{TU Dresden, Germany}

\fntext[tm]{Research supported by the DFG in grant KR~4381/1-1}
\fntext[mt]{Research supported by the Alexander von Humboldt Foundation}
\tnotetext[tr]{This is a full version of the paper accepted for DLT 2015.}

\begin{abstract}
  For a non-negative integer $k$, a language is $k$-piecewise test\-able ($k$-PT) if it is a finite boolean combination of languages of the form $\Sigma^* a_1 \Sigma^* \cdots \Sigma^* a_n \Sigma^*$ for $a_i\in\Sigma$ and $0\le n \le k$. We study the following problem: Given a DFA recognizing a piecewise testable language, decide whether the language is $k$-PT. We provide a complexity bound and a detailed analysis for small $k$'s. The result can be used to find the minimal $k$ for which the language is $k$-PT. We show that the upper bound on $k$ given by the depth of the minimal DFA can be exponentially bigger than the minimal possible $k$, and provide a tight upper bound on the depth of the minimal DFA recognizing a $k$-PT language.
\end{abstract}

\end{frontmatter}

\section{Introduction}
  A regular language is {\em piecewise testable} (PT) if it is a finite boolean combination of languages of the form 
  \[
    \Sigma^* a_1 \Sigma^* a_2 \Sigma^* \cdots \Sigma^* a_n \Sigma^*
  \]
  where $a_i\in \Sigma$ and $n\ge 0$. It is {\em $k$-piecewise testable} ($k$-PT) if $n \le k$. These languages were introduced by Simon in his PhD thesis~\cite{Simon1972}.
  Simon proved that PT languages are exactly those regular languages whose syntactic monoid is $\mathcal{J}$-trivial. He provided various characterizations of PT languages in terms of monoids, automata, etc. 

  In this paper, we study the {\em $k$-piecewise testability} problem, that is, to decide whether a PT language is $k$-PT.
  \begin{itemize}[leftmargin=*]
    \itemsep0pt
    \item[] \textsc{Name: $k$-PiecewiseTestability}
    \item[] \textsc{Input:} an automaton (minimal DFA or NFA) $\A$
    \item[] \textsc{Output: Yes} if and only if $\mathcal L(\A)$ is $k$-piecewise testable
  \end{itemize}
  
  Note that the problem is trivially decidable, since there is only a finite number of $k$-PT languages over the input alphabet of $\A$.
  
  We investigate the complexity of the problem and the relationship between $k$ and the depth of the input automaton. The motivation to study this relationship comes from the result showing that a PT language is $k$-PT for any $k$ bigger than or equal to the depth of its minimal DFA~\cite{KlimaP13}.
  
  Our motivation is twofold. 
  The first motivation is theoretical and comes from the investigation of various fragments of first-order logic over words, namely the Straubing-Th\'erien and dot-depth hierarchies. For instance, the languages of levels 1/2 and 1 of the dot-depth hierarchy are constructed as boolean combinations of variants of languages of the form $\Sigma^* w_1 \Sigma^* \ldots \Sigma^* w_n\Sigma^*$, where $w_i\in\Sigma^*$, cf.~\cite[Table~1]{KufleitnerL12}. The reader can notice a similarity to PT languages. For these fragments, a problem similar to $k$-piecewise testability is also relevant.
  
  The second, practical motivation comes from simplifying the XML Schema specification language. 
  
\paragraph{Simplification of XML Schema}
  XML Schema is currently the only schema language that is widely accepted and supported by industry. However, it is rather machine-readable than human-readable. It increases the expressiveness of DTDs, but this increase goes hand in hand with loss of simplicity. Moreover, its logical core does not seem to be well understood by users~\cite{MartensNNS-vldb12}.
  Therefore, the BonXai schema language has recently been proposed as an attempt to design a human-readable schema language. It combines the simplicity of DTDs with the expressiveness of XML Schema. Its aim is to simplify the development and analysis of XML Schema Definitions (XSDs). 
  The BonXai schema is a set of rules of the form $L_i \to R_i$, where $L_i$ and $R_i$ are regular expressions. An XML document (unranked tree) belongs to the language of the schema if, for every node of the tree, the labels of its children form a word that belongs to $R_i$ and its ancestors form a word that belongs to $L_i$, see~\cite{MartensNNS-vldb12} for more details.
  
  When translating an XSD into an equivalent BonXai schema, the regular expressions $L_i$ are obtained from a finite automaton embedded in the XSD. However, the current techniques of translating automata to regular expressions do not yet generate human-readable results. Therefore, we restrict ourselves to simpler classes of expressions that suffice in practice. Practical and theoretical studies show evidence that expressions of the form $\Sigma^* a_1 \Sigma^* \cdots \Sigma^* a_n$, where $a_i\in \Sigma$, and their variations are suitable for this purpose~\cite{GeladeN11,MartensNSB-tods06}.

  Any state of the DFA embedded in the XSD represents a language and we need to compute an over-approximation $L_i$ for each of them that is disjoint with the others. This reduces to the language separation problem: Given two languages $K$ and $L$ and a family of languages $\mathcal{F}$, is there a language $S$ in $\mathcal{F}$ such that $S$ includes $K$ and is disjoint with $L$? It is independently shown in~\cite{icalp2013} and~\cite{mfcsPlaceRZ13,lvanrooijen} that the separation problem for regular languages represented by NFAs and the family of PT languages is decidable in polynomial time. 
  A simple method (in the meaning of description) to compute a PT separator is described in~\cite{mfcs2014}, where its running time is investigated. Another technique is described in~\cite{mfcsPlaceRZ13}.
  
  Assume that we have computed a PT separator. Since the standard algorithms translating automata to regular expressions do not generate human-readable results and mostly use ``only'' the basic operations (concatenation, Kleene star and union), we face the problem how to generate human-readable expressions of the considered simple forms. Note that the expressions we are interested in contain the operations of intersection and complement (called generalized regular expressions). These operations make them non-elementary more succinct than classical regular expressions~\cite{Dang73,StockmeyerM73}. See also \cite{GeladeN12} for more details. Unfortunately, not much is known about transformations to generalized regular expressions~\cite{EllulKSW05}. 
  
  For a PT language it means to decompose it into a boolean combination of expressions $\Sigma^* a_1 \Sigma^* a_2 \Sigma^* \cdots \Sigma^* a_n \Sigma^*$. If we knew that the language is $k$-PT, this could be derived using a brute-force method and/or the {\em $\sim_k$-canonical DFA}, the DFA whose states are $\sim_k$ classes, cf.~Fact~\ref{mainProperty}. Indeed, the lower the $k$, the lower the complexity. An upper bound on $k$ is given by the depth of the minimal DFA~\cite{KlimaP13}. However, we show later that the minimal $k$ can be exponentially smaller than the depth of the DFA. Note that the number of states of the $\sim_k$-canonical DFA has recently been investigated in~\cite{Karandikar2015} and the literature therein.

\paragraph{Applications of PT Languages}
  Piecewise testable languages are of interest in many disciplines of mathematics and computer science. For instance, in semigroup theory~\cite{Almeida2008486,AlmeidaZ-ita97,PerrinPin}, since they possess interesting algebraic properties, namely, the syntactic monoid of a PT language is $\mathcal{J}$-trivial, where $\mathcal{J}$ is one of the Green relations; in logic over words~\cite{DiekertGK08,PlaceZ_icalp14,PlaceZ14} because of their close relation to first-order logic---piecewise testable languages can be characterized by a (two-variable) fragment of first-order logic over words, namely, they form level~1 of the Straubing-Th\'erien hierarchy as already depicted above; in formal languages and automata theory~\cite{CzerwinskiM14,KlimaP13,mfcsPlaceRZ13}, since their automata are of a special simple form (they are partially ordered and confluent) and PT languages form a strict subclass of the class of star-free languages, that is, languages definable by LTL formulas; in natural language processing, since they can describe some non-local patterns~\cite{FuHT2011,Rogers:2007}; in learning theory, since they are identifiable from positive data in the limit~\cite{GarciaR04,Kontorovich2008}; in XML databases~\cite{icalp2013}, which is our original motivation described in detail above. The list is not comprehensive and many other interesting results concerning PT languages can be found in the literature. It is also worth mentioning that PT languages and several results have recently been generalized from word languages to tree languages~\cite{Bojanczyk:2012}.

  We now give a brief overview on the complexity of the problem to decide whether a regular language is piecewise testable. As mentioned above, decidability was shown by Simon. In 1985, Stern showed that the problem is decidable in polynomial time for DFAs~\cite{Stern85a}. In 1991, Cho and Huynh~\cite{ChoH91} proved NL-completeness of the problem for DFAs. In 2001, Trahtman~\cite{Trahtman2001} improved Stern's result to obtain a quadratic algorithm. Another quadratic algorithm can be found in~\cite{KlimaP13}. The problem is PSPACE-complete if the languages are represented as NFAs~\cite{mfcs2014ex}.

\paragraph{Our Contribution}
  The {\em $k$-piecewise testability problem\/} asks whether, given a finite automaton $\A$, the language $L(\A)$ is $k$-PT. It is easy to see that if a language is $k$-PT, it is also $(k+1)$-PT. Kl\'ima and Pol\'ak~\cite{KlimaP13} have shown that if the depth of a minimal DFA recognizing a PT language is $k$, then the language is $k$-PT. However, the opposite implication does not hold, that is, the depth of the minimal DFA is only an upper bound on $k$. To the best of our knowledge, no efficient algorithm to find the minimal $k$ for which a PT language is $k$-PT nor an algorithm to decide whether a language is $k$-PT has been published so far.\footnote{Very recently, a co-NP upper bound appeared in~\cite{HofmanM15} in terms of separability.}

  We first give a co-NP upper bound to decide whether a minimal DFA recognizes a $k$-PT language for a fixed $k$ (Theorem~\ref{thmconp}), which results in an algorithm to find the minimal $k$ that runs in the time single exponential with respect to the size of the DFA and double exponential with respect to the resulting $k$. We then provide a detailed complexity analysis for small $k$'s. In particular, the problem is trivial for $k=0$, decidable in deterministic logarithmic space for $k=1$ (Theorem~\ref{thm1pt}), and NL-complete for $k=2,3$ (Theorems~\ref{thm2ptNL} and~\ref{thm3ptNL}). 
  As a result, we obtain a PSPACE upper bound to decide whether an NFA recognizes a $k$-PT language for a fixed $k$. Recall that it is PSPACE-complete to decide whether an NFA recognizes a PT language, and it is actually PSPACE-complete to decide whether an NFA recognizes a 0-PT language (Theorem~\ref{thmPSPACE}).
  
  Since the depth of the minimal DFAs plays a role as an upper bound on $k$, we investigate the relationship between the depth of an NFA and $k$-piecewise testability of its language. We show that, for every $k\ge 0$, there exists a $k$-PT language with an NFA of depth $k-1$ and with the minimal DFA of depth $2^k-1$ (Theorem~\ref{thmEXP}). Although it is well known that DFAs can be exponentially larger than NFAs, a by-product of our result is that all the exponential number of states of the DFA form a simple path.
  Finally, we investigate the opposite implication and show that the tight upper bound on the depth of the minimal DFA recognizing a $k$-PT language over an $n$-letter alphabet is $\binom{k+n}{k} - 1$ (Theorem~\ref{tightbound}). A relationship with Stirling cyclic numbers is also discussed.

  For all missing proofs, the reader is referred to the appendix.

\section{Preliminaries and Definitions}
  We assume that the reader is familiar with automata theory~\cite{lawson2003finite}. The cardinality of a set $A$ is denoted by $|A|$ and the power set of $A$ by $2^A$. An alphabet $\Sigma$ is a finite nonempty set. The free monoid generated by $\Sigma$ is denoted by $\Sigma^*$. A word over $\Sigma$ is any element of $\Sigma^*$; the empty word is denoted by $\eps$. For a word $w\in\Sigma^*$, $\alp(w)\subseteq\Sigma$ denotes the set of all letters occurring in $w$, and $|w|_a$ denotes the number of occurrences of letter $a$ in $w$. A language over $\Sigma$ is a subset of $\Sigma^*$. For a language $L$ over $\Sigma$, let $\overline{L}=\Sigma^*\setminus L$ denote the complement of $L$.
  
  A {\em nondeterministic finite automaton\/} (NFA) is a quintuple $\A = (Q,\Sigma,\cdot,I,F)$, where $Q$ is a finite nonempty set of states, $\Sigma$ is an input alphabet, $I\subseteq Q$ is a set of initial states, $F\subseteq Q$ is a set of accepting states, and $\cdot : Q\times\Sigma \to 2^Q$ is the transition function that can be extended to the domain $2^Q \times \Sigma^*$. The language {\em accepted\/} by $\A$ is the set $L(\A) = \{w\in\Sigma^* \mid I \cdot w \cap F \neq \emptyset\}$. We usually omit $\cdot$ and write simply $Iw$ instead of $I\cdot w$. A {\em path\/} $\pi$ from a state $q_0$ to a state $q_n$ under a word $a_1a_2\cdots a_{n}$, for some $n\ge 0$, is a sequence of states and input symbols $q_0 a_1 q_1 a_2 \ldots q_{n-1} a_{n} q_n$ such that $q_{i+1} \in q_i\cdot a_{i+1}$, for all $i=0,1,\ldots,n-1$. The path $\pi$ is {\em accepting\/} if $q_0\in I$ and $q_n\in F$. We use the notation $q_0 \xrightarrow{a_1a_2\cdots a_{n}} q_{n}$ to denote that there exists a path from $q_0$ to $q_n$ under the word $a_1a_2\cdots a_{n}$. A path is {\em simple\/} if all states of the path are pairwise different. The number of states on the longest simple path of $\A$ decreased by one (i.e., the number of transitions on that path) is called the {\em depth\/} of the automaton $\A$, denoted by $\fun{depth}(\A)$. 
  
  The NFA $\A$ is {\em deterministic\/} (DFA) if $|I|=1$ and $|q\cdot a|=1$ for every $q$ in $Q$ and $a$ in $\Sigma$. Then the transition function $\cdot$ is a map from $Q\times\Sigma$ to $Q$ that can be extended to the domain $Q\times\Sigma^*$. Two states of a DFA are {\em distinguishable\/} if there exists a word $w$ that is accepted from one of them and rejected from the other. A DFA is {\em minimal\/} if all its states are reachable and pairwise distinguishable. 

  Let $\A=(Q,\Sigma,\cdot,I,F)$ be an NFA. The reachability relation $\le$ on the set of states is defined by $p\le q$ if there exists a word $w$ in $\Sigma^*$ such that $q\in p\cdot w$. The NFA $\A$ is {\em partially ordered\/} if the reachability relation $\le$ is a partial order. For two states $p$ and $q$ of $\A$, we write $p < q$ if $p\le q$ and $p\ne q$. A state $p$ is {\em maximal\/} if there is no state $q$ such that $p < q$. Partially ordered automata are also called {\em acyclic automata\/}, see, e.g., \cite{KlimaP13}.

  The notion of confluent DFAs was introduced in~\cite{KlimaP13}. Let $\A = (Q, \Sigma, \cdot, i, F)$ be a DFA and $\Gamma \subseteq \Sigma$ be a subalphabet. The DFA $\A$ is $\Gamma$-confluent if, for every state $q$ in $Q$ and every pair of words $u, v$ in $\Gamma^*$, there exists a word $w$ in $\Gamma^*$ such that $(q u) w = (q v) w$. The DFA $\A$ is {\em confluent\/} if it is $\Gamma$-confluent for every subalphabet $\Gamma$. The DFA $\A$ is {\em locally confluent\/} if, for every state $q$ in $Q$ and every pair of letters $a, b$ in $\Sigma$, there exists a word $w$ in $\{a, b\}^*$ such that $(q a) w = (q b) w$.
  
  An NFA $\A=(Q,\Sigma,\cdot,I,F)$ can be turned into a directed graph $G(\A)$ with the set of vertices $Q$, where a pair $(p,q)$ in $Q \times Q$ is an edge in $G(\A)$ if there is a transition from $p$ to $q$ in $\A$. For $\Gamma \subseteq \Sigma$, we define the directed graph $G(\A,\Gamma)$ with the set of vertices $Q$ by considering all those transitions that correspond to letters in $\Gamma$. For a state $p$, let $\Sigma(p)=\{a\in\Sigma \mid p\in p\cdot a\}$ denote the set of all letters under which the NFA $\A$ has a self-loop in the state $p$. Let $\A$ be a partially ordered NFA. If for every state $p$ of $\A$, state $p$ is the unique maximal state of the connected component of $G(\A,\Sigma(p))$ containing $p$, then we say that the NFA satisfies the {\em unique maximal state (UMS) property}.
  
  A regular language is {\em $k$-piecewise testable}, for a non-negative integer $k$, if it is a finite boolean combination of languages of the form $\Sigma^* a_1 \Sigma^* a_2 \Sigma^* \cdots \Sigma^* a_n \Sigma^*$, where $0\le n \le k$ and $a_i\in \Sigma$. A regular language is {\em piecewise testable\/} if it is $k$-piecewise testable for some $k\ge 0$.
  We adopt the notation $L_{a_1 a_2 \cdots a_n} = \Sigma^* a_1 \Sigma^* a_2 \Sigma^* \cdots \Sigma^* a_n \Sigma^*$ from~\cite{KlimaP13}. For two words $v = a_1 a_2 \cdots a_n$ and $w \in L_{v}$, we say that $v$ is a {\em subsequence\/} of $w$ or that $v$ can be {\em embedded\/} into $w$, denoted by $v \preccurlyeq w$. For $k\ge 0$, let $\fun{sub}_k(v) =\{u\in\Sigma^* \mid u\preccurlyeq v,\, |u|\le k\}$. For two words $w_1,w_2$, define $w_1 \sim_k w_2$ if and only if $\fun{sub}_k(w_1)=\fun{sub}_k(w_2)$. If $w_1\sim_k w_2$, we say that $w_1$ and $w_2$ are {\em $k$-equivalent\/}. Note that $\sim_k$ is a congruence with finite index.

  \begin{fact}[\cite{Simon1972}]\label{mainProperty}
    Let $L$ be a regular language, and let $\sim_L$ denote the Myhill congruence~\cite{Myhill}.
    A language $L$ is $k$-PT if and only if $\sim_k\subseteq \sim_L$. 
    Moreover, $L$ is a finite union of $\sim_k$ classes.
  \end{fact}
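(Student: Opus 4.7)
The plan is to prove both implications using two key auxiliary facts: that $\sim_k$ is itself a congruence of finite index, and that each $L_v$ with $|v|\le k$ is a union of $\sim_k$-classes. For the congruence property, I would observe that if $u_1\sim_k u_2$ and $v_1\sim_k v_2$, any subsequence $s\preccurlyeq u_1v_1$ with $|s|\le k$ splits as $s=s_1 s_2$ with $s_1\preccurlyeq u_1$ and $s_2\preccurlyeq v_1$, both of length at most $k$; hence $s_1\preccurlyeq u_2$ and $s_2\preccurlyeq v_2$, so $s\preccurlyeq u_2v_2$. Symmetry yields $u_1v_1\sim_k u_2v_2$. Finite index is immediate, since the class of $w$ is determined by $\fun{sub}_k(w)$, and there are only finitely many subsets of the finite set of words of length at most $k$.

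For the $(\Leftarrow)$ direction, suppose $\sim_k \subseteq \sim_L$. Then $L$ is saturated by $\sim_k$, hence a union of $\sim_k$-classes, and by finite index this union is finite, which already establishes the \emph{moreover} clause. It remains to show each class is $k$-PT. For a word $w$, its $\sim_k$-class can be written as
\[
  [w]_{\sim_k} \;=\; \bigcap_{u \in \fun{sub}_k(w)} L_u \;\cap\; \bigcap_{\substack{|u|\le k \\ u \notin \fun{sub}_k(w)}} \overline{L_u},
\]
using $v\in L_u \Leftrightarrow u\preccurlyeq v$, which for $|u|\le k$ is equivalent to $u\in\fun{sub}_k(v)$. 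The right-hand side is a finite boolean combination of languages $L_u$ with $|u|\le k$, hence $k$-PT; and a finite union of $k$-PT languages remains $k$-PT.

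For the $(\Rightarrow)$ direction, assume $L$ is a boolean combination of $L_{v_1},\ldots,L_{v_m}$ with $|v_i|\le k$. Each $L_{v_i}$ is $\sim_k$-saturated, because $w\in L_{v_i} \Leftrightarrow v_i\in\fun{sub}_k(w)$ depends only on $[w]_{\sim_k}$; boolean combinations preserve saturation, so $L$ itself is a union of $\sim_k$-classes. Combining this with the congruence property of $\sim_k$: if $u\sim_k v$, then $xuy\sim_k xvy$ for all $x,y$, and saturation yields $xuy\in L \Leftrightarrow xvy\in L$, i.e., $u\sim_L v$; hence $\sim_k \subseteq \sim_L$. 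The only genuine technical step I expect to encounter is the congruence property of $\sim_k$, since everything else reduces to tracking unions and complements of saturated sets; the subsequence-splitting argument above handles it cleanly, so no serious obstacle is anticipated.
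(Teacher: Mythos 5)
Your proposal is correct. Note that the paper does not prove this statement at all---it is quoted as a known result from Simon's thesis---so there is no in-paper argument to compare against; what you give is the standard proof: $\sim_k$ is a congruence of finite index (via the subsequence-splitting argument), each $L_u$ with $|u|\le k$ is $\sim_k$-saturated so any $k$-PT language is a finite union of $\sim_k$-classes, and conversely each class $[w]_{\sim_k}$ is the finite boolean combination $\bigcap_{u\in \fun{sub}_k(w)} L_u \cap \bigcap_{|u|\le k,\, u\notin \fun{sub}_k(w)} \overline{L_u}$, with the congruence property converting saturation of $L$ into $\sim_k\subseteq\sim_L$. All steps check out, including the two-sided use of the congruence in the $(\Rightarrow)$ direction.
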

    
  The theorem says that if $L$ is $k$-PT, then any two $k$-equivalent words either both belong to $L$ or neither does. In terms of minimal DFAs, two $k$-equivalent words lead the automaton to the same state.
  
  \begin{fact}\label{thm:characterization}
    Let $L$ be a language recognized by the minimal DFA $\A$. The following is equivalent.
    \begin{enumerate}
      \itemsep0pt
      \item The language $L$ is PT.
      \item The minimal DFA $\A$ is partially ordered and (locally) confluent~\cite{KlimaP13}.
      \item The minimal DFA $\A$ is partially ordered and satisfies the UMS property~\cite{Trahtman2001}.
    \end{enumerate}
  \end{fact}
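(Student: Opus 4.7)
The plan is to establish the equivalences via the cycle $(1)\Rightarrow(2)\Rightarrow(3)\Rightarrow(1)$, using Fact~\ref{mainProperty} throughout to translate between ``$L$ is $k$-PT'' and ``$\sim_k\ \subseteq\ \sim_L$.''

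For $(1)\Rightarrow(2)$, assume $L$ is $k$-PT. To see that $\A$ is partially ordered, suppose for contradiction that $p\ne q$ satisfy $p\xrightarrow{u}q\xrightarrow{v}p$, and pick any $s$ reaching $p$ from the initial state. For $n>k$, the words $s(uv)^n$ and $s(uv)^n u$ have the same length-$\le k$ subsequences: after the common prefix $s$, each letter of $\alp(uv)$ already occurs at least $k$ times in either tail, so both words are $\sim_k$-equivalent; yet they reach the distinct states $p$ and $q$, contradicting Fact~\ref{mainProperty}. For local confluence, note that $a(ab)^k$ and $b(ab)^k$ both contain every word of length $\le k$ over $\{a,b\}$ as a subsequence (because $(ab)^k$ is universal for such subsequences), hence are $\sim_k$-equivalent; therefore $(r\cdot a)(ab)^k = (r\cdot b)(ab)^k$ for every state $r$, and $w=(ab)^k$ witnesses local confluence.

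For $(2)\Rightarrow(3)$, I would first observe that in any partially ordered DFA local confluence upgrades to confluence by a standard Newman-style diamond argument (termination is immediate from the partial order). Fix a state $p$ with self-loop alphabet $\Sigma(p)$, and let $C$ be the connected component of $G(\A,\Sigma(p))$ containing $p$. If another state $q\in C$ were also maximal, then from $p$ some word in $\Sigma(p)^*$ would reach $q$, and applying $\Sigma(p)$-confluence to $p$ and $q$ would produce a common successor strictly above both, contradicting the maximality of $p$ in the partial order.

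The direction $(3)\Rightarrow(1)$ is the most delicate and is where the main technical difficulty lies; here I would follow Trahtman's approach to prove $\sim_k\ \subseteq\ \sim_L$ for $k=\fun{depth}(\A)$. The argument runs by induction on the rank of the target state in the partial order: given $u\sim_k v$ and a current state $p$, one splits both runs at the first letter that exits the component of $G(\A,\Sigma(p))$, and invokes the UMS property to force both runs to be at the unique maximal state $p$ of the component at that moment, so that the exiting letter coincides on both sides; the inductive hypothesis then applies to the strictly rank-larger continuation. The main obstacle is coordinating the two $\sim_k$-equivalent runs so that they really exit each component via a matching letter at matching times, and the UMS property is precisely the structural feature that makes this alignment forced rather than merely possible.
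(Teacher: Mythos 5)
The paper does not actually prove this statement: it is quoted as a known Fact, with item~2 credited to \cite{KlimaP13} and item~3 to \cite{Trahtman2001}, so your proposal has to stand on its own. Your $(1)\Rightarrow(2)$ step is correct and pleasantly self-contained: $s(uv)^n \sim_k s(uv)^n u$ for $n\ge k$ kills nontrivial cycles via Fact~\ref{mainProperty}, and $a(ab)^k \sim_k b(ab)^k$ together with the fact that $\sim_k$ is a congruence yields local confluence with witness $(ab)^k$. (Your Newman-style upgrade from local to full confluence is also salvageable, but note the rewriting relation $q\mapsto q\cdot a$ is not terminating because of self-loops, so the well-founded induction must combine the strict order with derivation length.)

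The genuine gaps are in the other two steps. In $(2)\Rightarrow(3)$ you claim that if $q\neq p$ were another maximal state of the component $C$ of $G(\A,\Sigma(p))$ containing $p$, then ``from $p$ some word in $\Sigma(p)^*$ would reach $q$''. This is false: $C$ is only a weakly connected component, which gives an undirected path, not directed reachability; in fact $p$ carries a self-loop under every letter of $\Sigma(p)$, so $p\cdot w=p$ for all $w\in\Sigma(p)^*$ and $p$ can reach nothing but itself. Moreover, confluence as defined joins two \emph{words read from one common state}, so ``applying $\Sigma(p)$-confluence to $p$ and $q$'' is not a legal move, and the joined state would in any case not be ``strictly above both''. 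The standard repair: use $\Sigma(p)$-confluence to show that every state $r$ reaches a unique $\Sigma(p)$-maximal state $\mu(r)$ (two maximal states reached from $r$ by $u,v\in\Sigma(p)^*$ are joinable by some $w$, and maximality forces them equal), observe that $\mu$ is unchanged along every $\Sigma(p)$-edge and hence constant on each weakly connected component, and conclude $p=\mu(p)=\mu(q)=q$. Finally, $(3)\Rightarrow(1)$, the only genuinely hard direction, is not proved at all: you describe the intended induction and yourself flag the alignment of the two runs as ``the main obstacle'', but nothing in the sketch forces two $\sim_k$-equivalent words (with $k=\fun{depth}(\A)$) to leave a component at its unique maximal state, or via the same letter, at compatible moments; that alignment is exactly the content of the cited results of \cite{Trahtman2001} and \cite{KlimaP13}, and it requires an actual argument rather than an appeal to the UMS property being ``precisely the structural feature'' that makes it work.
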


\section{Complexity of $k$-Piecewise Testability for DFAs}\label{section3}
  The {\em $k$-piecewise testability problem for DFAs\/} asks whether, given a minimal DFA $\A$, the language $L(\A)$ is $k$-PT. We show that it belongs to co-NP, which can be used to compute the minimal $k$ for which the language is $k$-PT in the time single exponential with respect to the size of the DFA and double exponential with respect to the resulting $k$. For small $k$'s we then provide precise complexity analyses.

  We now prove the following theorem.
  \begin{theorem}\label{thmconp}
    The following problem belongs to co-NP:
    \begin{itemize}[leftmargin=*]
      \itemsep0pt
      \item[] \textsc{Name: $k$-PiecewiseTestability}
      \item[] \textsc{Input:} a minimal DFA $\A$
      \item[] \textsc{Output: Yes} if and only if $\mathcal L(\A)$ is $k$-PT
    \end{itemize}
  \end{theorem}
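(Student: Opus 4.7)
By Fact~\ref{mainProperty} (and the remark following it), $\mathcal{L}(\A)$ is $k$-PT iff every two $\sim_k$-equivalent words reach the same state of the minimal DFA $\A$. So $\mathcal{L}(\A)$ fails to be $k$-PT precisely when there exist $u,v\in\Sigma^*$ with $\fun{sub}_k(u)=\fun{sub}_k(v)$ leading to distinct states of $\A$; placing this existential statement in NP yields the desired co-NP upper bound. For fixed $k$ the verifier is cheap: the set $\fun{sub}_k(w)$ has size at most $\sum_{i=0}^{k}|\Sigma|^{i}=O(|\Sigma|^{k})$ and is computable from $w$ in polynomial time by a left-to-right scan, and $\A$ can be simulated on both words in linear time. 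The one non-trivial ingredient is a polynomial bound on the length of a smallest such counterexample $(u,v)$.

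To establish the length bound, I would consider the product automaton $\A\times T_k$, where $T_k$ is the (exponentially large) deterministic automaton whose states are possible values of $\fun{sub}_k$ and whose transitions update this set after reading a letter. Since reading any letter can only enlarge $\fun{sub}_k$, along any run of $\A\times T_k$ the $T_k$-component strictly grows at most $|\Sigma|^{\le k}=O(|\Sigma|^{k})$ times. Any reachability path therefore decomposes into at most $|\Sigma|^{\le k}+1$ phases inside of which $T_k$ is constant at some value $S$; within such a phase the path uses only letters $a$ for which $wa\in S$ for every $w\in S$ with $|w|<k$, so it lies in a sub-DFA of $\A$ on the same state set and can be replaced by a simple path of length at most $|Q|-1$. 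Summing over phases, every reachable configuration $(p,S)$ of $\A\times T_k$ is reachable by a word of length at most $(|\Sigma|^{\le k}+1)\cdot|Q|$, polynomial in $|\A|$ for fixed $k$.

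Applying this shortening separately to the $u$- and $v$-sides of an assumed counterexample produces $\sim_k$-equivalent words $u',v'$ of polynomial length that still reach the same two distinct states $p\ne q$ as the originals. The NP procedure for the complement then guesses $u'$ and $v'$ within the polynomial length bound, verifies $\fun{sub}_k(u')=\fun{sub}_k(v')$ in polynomial time, and simulates $\A$ on both to check that the resulting states differ. The main obstacle is the length-bound argument based on the monotonicity of $T_k$; once it is in place, NP membership of the complement, and hence co-NP membership of $k$-PiecewiseTestability, are immediate.
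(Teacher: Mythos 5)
Your proposal is correct, but it reaches the polynomial witness bound by a genuinely different route than the paper. The paper never forms the product with the $\sim_k$-canonical automaton; instead it proves two lemmas: first, if $L(\A)$ is not $k$-PT there is a counterexample pair in which the shorter word $w_1$ can be taken of length at most $k|\Sigma|^k$ (by discarding letters that do not strictly enlarge $\fun{sub}_k$ of the prefix) and, via Simon's common-superword theorem (Theorem~6.2.6 in~\cite{SimonS97}), the pair can be arranged so that $w_1 \preccurlyeq w_2$; second, using that a minimal DFA of a PT language is partially ordered, the longer word $w_2$ can be pruned of letters outside the embedding that cause no state change, so that $|w_2|\le |w_1|+\fun{depth}(\A)$. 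This forces the paper to first verify (in polynomial time) that $\A$ recognizes a PT language before the lemmas apply, and it yields bounds independent of $|Q|$ for $w_1$ and only $\fun{depth}(\A)$ longer for $w_2$; the subword structure of the certificate is also reused later for the NFA/PSPACE argument. Your argument replaces all of this with the monotonicity of the $\fun{sub}_k$-component in $\A\times T_k$: at most $|\Sigma^{\le k}|$ strict growth steps, and within each constant phase a loop-removal shortening to at most $|Q|-1$ letters over the letters stabilizing the current set $S$, giving a bound of roughly $(|\Sigma^{\le k}|+1)\cdot|Q|$ for each witness word. This is slightly weaker quantitatively (it depends on $|Q|$ and drops the subword relation between the two witnesses) but is more self-contained: it needs neither Simon's superword theorem, nor acyclicity, nor a preliminary PT check, since the characterization ``$L$ is $k$-PT iff $\sim_k$-equivalent words reach the same state of the minimal DFA'' holds for arbitrary regular languages. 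Both arguments correctly place the complement in NP, hence the problem in co-NP, for each fixed $k$.
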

  
  Let $w_1$ and $w_2$ be two words such that $w_1\preccurlyeq w_2$. Let $\varphi: \{1,2,\ldots,|w_1|\}\to \{1,2,\ldots,|w_2|\}$ be a monotonically increasing mapping induced by one of the possible embeddings of $w_1$ into $w_2$, that is, the letter at the $j$\textsuperscript{th} position in $w_1$ coincides with the letter at the $\varphi(j)$\textsuperscript{th} position in $w_2$. Any such $\varphi$ is called a {\em witness (of the embedding) of $w_1$ in $w_2$}. If we speak about {\em a letter $a$ of $w_2$ that does not belong to the range of $\varphi$}, we mean an occurrence of $a$ in $w_2$ whose position does not belong to the range of $\varphi$.

  \begin{lemma}
  \label{lemma-bound-long}
    Let $\A$ be a minimal DFA recognizing a PT language. If there exist two words $w_1$ and $w_2$ that are $k$-equivalent and lead to two different states from the initial state, such that $w_1$ is a subword of $w_2$, then there exists a $w_2'$ that is $k$-equivalent to $w_1$ leading to the same state as $w_2$ such that $w_2'$ contains at most $\fun{depth}(\A)$ more letters than $w_1$.
  \end{lemma}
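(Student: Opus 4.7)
My plan is to exploit Fact~\ref{thm:characterization}, which guarantees that the minimal DFA $\A$ of a PT language is partially ordered, and to construct $w_2'$ as a carefully chosen subsequence of $w_2$ that still contains $w_1$ as a subsequence. Fix an embedding witness $\varphi$ of $w_1$ in $w_2$, let $i$ be the initial state, and let $i=q_0,q_1,\ldots,q_N$ be the run of $w_2=c_1c_2\cdots c_N$, so that $q_N$ is the state reached by $w_2$. Because $\A$ is partially ordered, each step satisfies $q_{j-1}\le q_j$, and every position is either a self-loop position ($q_{j-1}=q_j$) or a strict position ($q_{j-1}<q_j$). Let $S$ be the set of strict positions; the distinct states visited form a strictly increasing chain $q_0<q_{j_1}<\cdots<q_{j_{|S|}}$, and together with their transitions they yield a simple path of $\A$, so $|S|\le\fun{depth}(\A)$.

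Define $T=\{\varphi(1),\ldots,\varphi(|w_1|)\}\cup S$ and let $w_2'$ be the word read by restricting $w_2$ to the positions in $T$. Since every letter of the embedded $w_1$ is kept, $w_1\preccurlyeq w_2'\preccurlyeq w_2$, and together with $w_1\sim_k w_2$ this sandwiches $\fun{sub}_k(w_1)=\fun{sub}_k(w_2')=\fun{sub}_k(w_2)$, so $w_2'\sim_k w_1$. Moreover $|w_2'|-|w_1|\le|S|\le\fun{depth}(\A)$, which is exactly the size bound claimed by the lemma.

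The remaining task, and the main technical step, is to verify that $w_2'$ leads to the same state $q_N$ as $w_2$. Writing $T=\{t_1<\cdots<t_m\}$, I prove by induction on $j$ that $i\cdot c_{t_1}c_{t_2}\cdots c_{t_j}=q_{t_j}$. In the inductive step, all positions strictly between $t_{j-1}$ and $t_j$ lie outside $T$, hence outside $S$, and are thus self-loop positions of the $w_2$-run, all at state $q_{t_{j-1}}$; dropping them therefore leaves the action of $c_{t_j}$ unchanged, and $q_{t_{j-1}}\cdot c_{t_j}=q_{t_j}$ exactly as in the $w_2$-run. The same reasoning past $t_m$ yields $q_{t_m}=q_N$. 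The edge case $S=\emptyset$ is excluded by the hypothesis: it would force $i$ to self-loop on every letter of $w_2$, hence on every letter of $w_1$, making $w_1$ and $w_2$ both reach $i$. The only subtlety is this state-tracking between the two runs; once one notes that non-strict transitions in a partially ordered minimal DFA are self-loops, the induction becomes immediate, and this is where the whole proof hinges.
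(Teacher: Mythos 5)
Your proof is correct and takes essentially the same approach as the paper: both delete from $w_2$ the occurrences outside the chosen embedding of $w_1$ whose transitions do not change the state, and bound the remaining extra letters by the number of state changes along the run, which is at most $\fun{depth}(\A)$ because the minimal DFA of a PT language is partially ordered. The only difference is presentational -- you remove all such letters at once and track the run explicitly, whereas the paper removes them one at a time by induction.
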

  \begin{proof}
    Let us consider $w_1$ and $w_2$ as in the statement of the lemma. Let $\varphi$ be a witness of $w_1$ in $w_2$.  Let $a$ be a letter of $w_2$ that does not belong to the range of $\varphi$. Let us denote $w_2 = w_aaw_a^c$. If $iw_aa = iw_a$, then $iw_aw_a^c = iw_2$. Moreover, since $a \not \in \fun{range}(\varphi)$, $w_1$ is a subword of $w_aw_a^c$. Thus, $\fun{sub}_k(w_1) \subseteq \fun{sub}_k(w_aw_a^c) \subseteq \fun{sub}_k(w_2)$, which proves that $w_1$ and $w_aw_a^c$ are $k$-equivalent. By induction on the number of letters in $w_2$ that do not belong to the range of the given witness of $w_1$ in $w_2$ and that do not trigger a change of state in $\A$, one can show that there exists a word equivalent to $w_1$ and leading to the same state as $w_2$ that does not contain any such letter. Since in a run of an acyclic automaton there are at most $\fun{depth}(\A)$ changes of states, this concludes the proof.
  \qed
  \end{proof}

  \begin{lemma}
  \label{lemma-bound-short}
    Let $\A$ be a minimal DFA recognizing a PT language. If $\mathcal L(\A)$ is not $k$-PT, there exist two words $w_1$ and $w_2$ such that:
    \begin{itemize}
      \itemsep0pt
      \item $w_1$ and $w_2$ are $k$-equivalent;
      \item the length of $w_1$ is at most $k|\Sigma|^k$;
      \item $w_1$ is a subword of $w_2$;
      \item $w_1$ and $w_2$ lead to two different states from the initial state.
    \end{itemize}
  \end{lemma}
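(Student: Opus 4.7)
The plan is to produce a minimal ``bad subword pair'' and then bound the length of its first component by a counting argument on $\sim_k$-irreducible words.

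By Fact~\ref{mainProperty}, the failure of $k$-piecewise testability yields words $u \sim_k v$ with $iu \ne iv$. My first task is to establish the existence of a bad \emph{subword} pair, i.e.\ words $\omega_1 \preccurlyeq \omega_2$ with $\omega_1 \sim_k \omega_2$ and $i\omega_1 \ne i\omega_2$. The strategy is to find a single $z \sim_k u$ admitting both $u$ and $v$ as subsequences; since $iu \ne iv$, at least one of $(u,z)$ or $(v,z)$ is such a pair. The naive choice $z = uv$ typically lies in a strictly larger $\sim_k$ class, so a more careful construction is required, for instance taking a sufficiently high power of $uv$ inside the finite $\mathcal{J}$-trivial quotient monoid $\Sigma^*/\sim_k$, exploiting idempotents together with the fact that every element of $[u]_k$ shares alphabet $\alp(u)$ when $k \ge 1$.

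Given such a pair exists, I select $(w_1, w_2)$ minimizing $|w_1|$ and claim $w_1$ must be \emph{$\sim_k$-irreducible}: no proper subsequence $w_1' \prec w_1$ satisfies $w_1' \sim_k w_1$. Indeed, if some such $w_1'$ existed, then either $(w_1', w_2)$ is a smaller bad pair (when $iw_1' \ne iw_2$), or $(w_1', w_1)$ is one (when $iw_1' = iw_2 \ne iw_1$); either contradicts minimality. To bound $|w_1|$, the case $|w_1| < k$ is trivial, so assume $|w_1| \ge k$. Every position $j \in \{1, \ldots, |w_1|\}$ is essential, so some $s_j \in \fun{sub}_k(w_1)$ has $j$ in the range of every embedding of $s_j$ into $w_1$; extending $s_j$ by an additional letter of $w_1$ preserves this ``every embedding uses $j$'' property (the restriction of any embedding of the extension yields an embedding of $s_j$, whose range lies inside the extension's range), so $s_j$ may be taken of length exactly $k$. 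For any fixed length-$k$ word $s$, the positions it witnesses form the intersection of the ranges of all embeddings of $s$ into $w_1$, a set of size at most $k$. With at most $|\Sigma|^k$ candidate length-$k$ words, this yields $|w_1| \le k\,|\Sigma|^k$.

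The hard part will be the existence step: merging $u$ and $v$ into a single word of $[u]_k$. The naive concatenation $uv$ tends to exit the class, and resolving this requires exploiting the finer structure of the $\mathcal{J}$-trivial quotient $\Sigma^*/\sim_k$, in particular idempotents and their compatibility with subsequence embedding.
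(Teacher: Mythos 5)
There is a genuine gap, and you have located it yourself: the existence of a word $z$ with $u \preccurlyeq z$, $v \preccurlyeq z$ and $z \sim_k u$ is not proved, and the route you sketch for it does not work. Taking powers of $uv$ can only enlarge the set of subwords, never shrink it back to $\fun{sub}_k(u)$: already for $u=v=ab$ and $k=2$ every word $(uv)^N$ contains $aa$, $bb$ and $ba$, none of which lie in $\fun{sub}_2(ab)$, so no idempotent power of $[uv]$ lands in $[u]_k$ (here the correct merge is just $z=u$). Idempotency in the quotient monoid controls equality of classes under further multiplication, not the subsequence-embedding requirement, so ``finer structure of the $\mathcal{J}$-trivial quotient'' is not a proof plan but a restatement of the difficulty. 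The statement you need is precisely Simon's common-supersequence theorem (Theorem~6.2.6 in Sakarovitch--Simon, \emph{Subwords}): any two $\sim_k$-equivalent words admit a common superword in the same $\sim_k$ class. The paper does not prove it either; it cites it, and moreover arranges its argument so that the citation is needed in only one of two cases: it first shortens $w_1,w_2$ by deleting letters whose removal does not change the prefix subword sets, which keeps the shortened word a \emph{subword} of the original, so when the two shortened words land in the same state the original/shortened pair already gives the required subword pair without any merging.

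The remainder of your argument is correct and is a genuinely different way to get the length bound. Choosing a bad subword pair with $|w_1|$ minimal and showing $w_1$ is $\sim_k$-irreducible is sound (both subcases of the contradiction are handled properly), and the counting is valid: for each position $j$ there is a subword $s_j$, extendable to length exactly $k$ when $|w_1|\ge k$, all of whose embeddings use $j$; the positions witnessed by a fixed length-$k$ word form the intersection of the ranges of its embeddings, hence at most $k$ of them; with $|\Sigma|^k$ candidates this gives $|w_1|\le k|\Sigma|^k$. This replaces the paper's strictly-increasing-prefix argument (bound $\sum_{n=1}^{k}|\Sigma|^n$) at comparable strength. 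So the fix is simple: invoke the known merging theorem (or reorganize as the paper does to need it only in one case), rather than attempting to derive it from idempotents, which as stated would fail.
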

  
  \begin{proof}
    If $\mathcal L(\A)$ is not $k$-PT, then there exist $w_1$ and $w_2$ that are $k$-equivalent and lead to two different states from the initial state. Let us show that for $i \in \{1,2\}$, there exists $w_i'$ such that $w_i \sim_k w_i'$ and the length of $w_i'$ is at most $k|\Sigma|^k$. Let $w_i^k$ denote the prefix of $w_i$ of length $k$. Assume that there exists $j$ such that $\fun{sub}_k(w_i^j) = \fun{sub}_k(w_i^{j+1})$. Then the letter at the $({j+1})^\mathrm{th}$ position of $w_i$ can be removed while keeping the same set of subwords of length $k$. Thus there exists $w_i'$ equivalent to $w_i$ such that any two different prefixes of $w_i'$ are not $k$-equivalent. Moreover, since $\fun{sub}_k(w_i^j) \subsetneq \fun{sub}_k(w_i^{j+1})$, such a $w_i'$ contains at most $\sum_{n=1}^{k} |\Sigma|^n \le k|\Sigma|^k$ letters. 
 
    To complete the proof, there are two cases. Either $w_1'$ and $w_2'$ lead to the same state: then, without loss of generality, $w_1'$ and $w_1$ lead to two different states, which proves the claim. Or $w_1'$ and $w_2'$ lead to two different states: then consider $w'$ such that $w' \sim_k w_1'$, and both $w_1'$ and $w_2'$ are subwords of $w'$, which exists by~\cite[Theorem~6.2.6]{SimonS97}. Without loss of generality, $w_1'$ and $w'$ fulfill the required conditions.
  \qed
  \end{proof}
  
  \begin{proof}[of Theorem~\ref{thmconp}]
    One can first check that the automaton $\A$ over $\Sigma$ recognizes a PT language. By Lemma~\ref{lemma-bound-short}, if $\mathcal L(\A)$ is not $k$-PT, there exist two $k$-equivalent words $w_1$ and $w_2$, with the length of $w_1$ being at most $k|\Sigma|^k$, $w_1$ being a subword of $w_2$, and $w_1$ and $w_2$ leading the automaton to two different states. By Lemma~\ref{lemma-bound-long}, one can choose $w_2$ of length at most $\fun{depth}(\A)$ bigger than the length of $w_1$. A polynomial certificate for non $k$-piecewise testability can thus be given by providing such $w_1$ and $w_2$, which are indeed of polynomial length in the size of $\A$ and $\Sigma$.
  \qed
  \end{proof}

  If we search for the minimal $k$ for which the language is $k$-PT, we can first check whether it is 0-PT. If not, we check whether it is 1-PT and so on until we find the required $k$. In this case, the bounds $k|\Sigma|^k$ and $k|\Sigma|^k+\fun{depth}(\A)$ on the length of words $w_1$ and $w_2$ that need to be investigated are exponential with respect to $k$. To investigate all the words up to these lengths then gives an algorithm that is exponential with respect to the size of the minimal DFA and double exponential with respect to the desired $k$.
  
  \begin{proposition}
    Let $\A$ be a minimal DFA that is partially ordered and confluent. To find the minimal $k$ for which the language $L(\A)$ is $k$-PT can be done it time exponential with respect to the size of $\A$ and double exponential with respect to the resulting $k$.
  \end{proposition}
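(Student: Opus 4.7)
The plan is to combine the co-NP witness argument of Theorem~\ref{thmconp} with a brute-force enumeration over successively larger values of $k$, and simply account for the sizes. Since $L(\A)$ is PT, Klíma--Polák's bound gives $L(\A)$ is $k$-PT for every $k \ge \fun{depth}(\A)$, so iterating $k = 0, 1, 2, \ldots$ is guaranteed to terminate with the minimal such $k$ after at most $\fun{depth}(\A)+1$ rounds.

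For a fixed candidate $k$, I would decide whether $L(\A)$ fails to be $k$-PT by exhaustively looking for the kind of witness pair $(w_1,w_2)$ described in Lemma~\ref{lemma-bound-short} and Lemma~\ref{lemma-bound-long}: $w_1$ of length at most $k|\Sigma|^k$, $w_2$ of length at most $k|\Sigma|^k + \fun{depth}(\A)$, with $w_1 \preccurlyeq w_2$, $w_1 \sim_k w_2$, and the two words leading to different states in $\A$. Enumerating all such pairs costs at most $|\Sigma|^{O(k|\Sigma|^k + \fun{depth}(\A))}$ time, and for each pair checking $\sim_k$-equivalence (by comparing $\fun{sub}_k$, which has at most $|\Sigma|^k$ elements) and the reached states is polynomial in $|\A|$ and in the word length. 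The factor $|\Sigma|^{\fun{depth}(\A)}$ is single exponential in the size of $\A$, while the factor $|\Sigma|^{k|\Sigma|^k}$ is double exponential in $k$, giving the claimed bound for one round.

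Finally, since we only iterate up to the minimal $k$, the cost of the rounds for smaller values $k' < k$ is dominated by the cost of round $k$, so summing contributes only a polynomial overhead. Combined with the polynomial-time test (e.g.\ \cite{Trahtman2001,KlimaP13}) that $L(\A)$ is PT in the first place, this yields the stated single-exponential dependence on $|\A|$ and double-exponential dependence on $k$. The only real bookkeeping step, and hence the main obstacle, is checking that the crude $|\Sigma|^{k|\Sigma|^k+\fun{depth}(\A)}$ enumeration absorbs the inner polynomial factors (computing $\fun{sub}_k$ and simulating $\A$ on each word) without leaving the announced complexity class; this is routine but must be made explicit.
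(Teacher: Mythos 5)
Your proposal is correct and follows essentially the same route as the paper: iterate $k=0,1,2,\ldots$ and, for each candidate $k$, exhaustively search for a witness pair $(w_1,w_2)$ whose lengths are bounded via Lemma~\ref{lemma-bound-short} and Lemma~\ref{lemma-bound-long}, with the enumeration cost $|\Sigma|^{O(k|\Sigma|^k+\fun{depth}(\A))}$ giving the single-exponential dependence on $|\A|$ and double-exponential dependence on the resulting $k$. Your additional remarks (termination via the Kl\'ima--Pol\'ak depth bound and the initial PT check) are consistent with the paper and only make the bookkeeping explicit.
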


  Theorem~\ref{thmconp} gives an upper bound on the complexity to decide whether a language is $k$-PT for a fixed $k$. We now show that for $k\le 3$, the complexity of the problem is much simpler.

\paragraph{$0$-Piecewise Testability}
 Let $\A$ be a minimal DFA over an alphabet $\Sigma$. The language $L(\A)$ is $0$-PT if and only if it has a single state, that is, it recognizes either $\Sigma^*$ or $\emptyset$. Thus, given a minimal DFA, it is decidable in $O(1)$ whether its language is 0-PT.

\paragraph{$1$-Piecewise Testability}
  \begin{theorem}\label{thm1pt}
    The problem to decide whether a minimal DFA recognizes a 1-PT language is in LOGSPACE.
  \end{theorem}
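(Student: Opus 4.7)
My plan is to characterize $1$-PT by two purely local conditions on the minimal DFA $\A = (Q, \Sigma, \cdot, i, F)$, both verifiable in deterministic logarithmic space. I will establish that $L(\A)$ is $1$-PT if and only if, for every state $q \in Q$ and all letters $a, b \in \Sigma$,
\[
  q \cdot aa = q \cdot a \text{\ (C1)} \quad \text{and} \quad q \cdot ab = q \cdot ba \text{\ (C2)}.
\]

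Necessity follows immediately from Fact~\ref{mainProperty}: if $L(\A)$ is $1$-PT then $\sim_1 \subseteq \sim_L$, so any two words with the same alphabet lead to the same state. Writing $q = i \cdot w$ and observing that $\alp(wa) = \alp(waa)$ and $\alp(wab) = \alp(wba)$ yields (C1) and (C2) at every reachable state; since $\A$ is minimal, all states are reachable.

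Sufficiency is the main content. Assuming (C1) and (C2), I argue by a bubble-sort style induction: each swap of two adjacent letters inside a word $w = a_1 \cdots a_n$ is an instance of (C2) applied at the intermediate state $q \cdot a_1 \cdots a_{j-1}$, so $q \cdot w$ is invariant under permutations of the letters of $w$. After sorting, equal letters become consecutive and (C1), again applied at the appropriate intermediate state, collapses every block $a^m$ to a single $a$. Hence $q \cdot w$ depends only on $\alp(w)$; specializing $q := i$ gives $\sim_1 \subseteq \sim_L$, and Fact~\ref{mainProperty} concludes that $L(\A)$ is $1$-PT.

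Once the characterization is in place, the LOGSPACE algorithm simply iterates over all triples $(q, a, b) \in Q \times \Sigma \times \Sigma$, each storable in $O(\log |\A|)$ bits, and verifies (C1) and (C2) by a constant number of transition look-ups per triple. The main obstacle I anticipate is getting the sufficiency direction phrased cleanly: the local conditions must hold at \emph{every} state so that the rewriting steps used inside a run remain legitimate; a variant that only constrained transitions out of $i$ would fail to imply $1$-PT, as can be seen already by a two-letter DFA that has self-loops at $i$ but whose successors distinguish the order of reading $a$ and $b$.
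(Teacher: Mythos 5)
Your proposal is correct and essentially identical to the paper's own proof: the two local conditions (C1), (C2) are exactly the paper's characterization, necessity is argued the same way from Fact~\ref{mainProperty} and minimality, and sufficiency is the same commuting/idempotency rewriting showing $i\cdot w$ depends only on $\alp(w)$, followed by the obvious LOGSPACE check over all triples $(q,a,b)$. The only cosmetic difference is that the paper additionally remarks that the conditions yield partial order and confluence, which your direct appeal to Fact~\ref{mainProperty} (using that the invariance holds at every state, so $\sim_1$-equivalent words are syntactically equivalent) makes unnecessary.
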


  The proof of Theorem~\ref{thm1pt} follows immediately from the following lemma.
  \begin{lemma}
    Let $\A=(Q,\Sigma,\cdot,i,F)$ be a minimal DFA. The language $L(\A)$ is 1-PT if and only if both of the following holds:
    \begin{enumerate}
      \item for every $p\in Q$ and $a\in\Sigma$, $pa=q$ implies $qa=q$,
      \item for every $p\in Q$ and $a,b\in\Sigma$, $pab=pba$.
    \end{enumerate}
  \end{lemma}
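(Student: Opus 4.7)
The plan is to use Fact~\ref{mainProperty} to reduce the lemma to a purely automata-theoretic statement. Since $u \sim_1 v$ exactly means $\alp(u) = \alp(v)$, and since $\A$ is minimal (so $u \sim_L v$ iff $iu = iv$), the claim $L(\A)$ is $1$-PT is equivalent, via Fact~\ref{mainProperty}, to: for every two words $u$ and $v$ with $\alp(u)=\alp(v)$, we have $iu = iv$. So one has to show that this reachability condition is equivalent to the two bullet points.

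For the forward direction, assume $L(\A)$ is $1$-PT, fix $p \in Q$, and, using minimality (every state is reachable), pick a word $w$ with $iw = p$. To verify condition~(1), observe that $\alp(wa) = \alp(waa)$, hence $wa \sim_1 waa$, hence (by the equivalence above) $iwa = iwaa$, i.e., $pa = paa$; writing $q = pa$ gives $qa = q$. For condition~(2), use $wab \sim_1 wba$ to deduce $pab = pba$ in exactly the same way.

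For the backward direction, assume (1) and (2). The key claim is that for every state $p$ and every word $w$, the state $pw$ depends only on $\alp(w)$. Applying (2) at the intermediate state reached just before any two adjacent letters $ab$ in $w$ shows that swapping them does not change the resulting state; thus any permutation of $w$ leads from $p$ to the same state as $w$ (bubble sort). Applying (1) at the state reached just after any letter $a$ shows that a second consecutive $a$ can be dropped without changing the resulting state; thus consecutive duplicates can be collapsed. Iterating, $pw = pw_0$ where $w_0$ is the (unique) sorted, duplicate-free word with $\alp(w_0) = \alp(w)$. Consequently, $w_1 \sim_1 w_2$ implies $iw_1 = iw_2$, hence $w_1 \sim_L w_2$, so by Fact~\ref{mainProperty} the language $L(\A)$ is $1$-PT.

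The only subtlety is that the local swap and collapse rules need to be valid at every state encountered along a run, not just at the initial state; but this is exactly why conditions~(1) and~(2) are quantified over all $p \in Q$, so there is no real obstacle. Everything else is routine.
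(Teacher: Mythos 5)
Your proof is correct and takes essentially the same route as the paper: the forward direction uses reachability of $p$ together with the $1$-equivalences $wa \sim_1 waa$ and $wab \sim_1 wba$, and the backward direction shows that $i\cdot w$ depends only on $\alp(w)$ by commuting adjacent letters and collapsing repeated ones, concluding via Fact~\ref{mainProperty} and minimality. The paper additionally re-derives partial order and confluence of $\A$ in the backward direction, but as your argument shows this is not needed once $\sim_1\subseteq\sim_L$ is established.
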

  \begin{proof}
    We show successively both directions of the equivalence. 

    ($\Rightarrow$) Assume that $L(\A)$ is 1-PT. Since $\mathcal A$ is minimal, $p$ is reachable. Thus, there exists $w$ such that $iw = p$. It holds that $\alp(wa) = \alp(waa)$, thus $wa$ and $waa$ lead to the same state, that is, $qa = q$. Similarly, we notice that $\alp(wab) = \alp(wba)$, and thus $pab = pba$.

    ($\Leftarrow$) We show that for any word $w$, it holds that $iw = ia_1a_2\ldots a_n$, where $\alp(w) = \{a_1,a_2,\ldots,a_n\}$. This then proves that if $w_1 \sim_1 w_2$, then $iw_1 = iw_2$. Thus, since for any letters $a,b \in \Sigma$ and any state $q$, $qab = qba$, we have that $iw = ia_1^{k_1}a_2^{k_2}\ldots a_n^{k_n}$, where $k_i$ is the number of appearances of $a_i$ in $w$. By assumption~$1$ and induction on $k_1 \geq 1$, $ia_1^{k_1} = ia_1$. By induction on $n$, we thus show that $iw = ia_1a_2\ldots a_n$. This shows confluency of $\A$. To show that $\A$ is partially ordered, assume that there exists a cycle $p\xrightarrow{a} q\xrightarrow{w} r\xrightarrow{b} p$, for some states $p\neq q$ and $r$, and a word $w\in\Sigma^*$. By the previous argument, we have that $r = p\cdot aw = p\cdot awa$, that is, $r\cdot a = r$. But then $r\cdot ab = p \neq q = r\cdot ba$, which violates the second assumption.
  \qed\end{proof}

\paragraph{$2$-Piecewise Testability}
  We show that the problem to decide whether a minimal DFA recognizes a 2-PT language is NL-complete. Note that this complexity coincides with the complexity to decide whether the language is PT, that is, whether there exists a $k$ for which the language is $k$-PT.

  \begin{theorem}\label{thm2ptNL}
    The problem to decide whether a minimal DFA recognizes a 2-PT language is NL-complete.
  \end{theorem}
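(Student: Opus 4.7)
The plan is to prove NL-hardness and NL-membership separately. For NL-hardness, I would reduce from a known NL-complete problem, most naturally directed graph reachability (GAP). Given a digraph $G$ and vertices $s,t$, I build a minimal DFA $\A$ over a small fixed alphabet whose states encode the vertices of $G$ together with a few gadget states; transitions on one designated letter follow the edges of $G$, while a second letter interacts with the gadget so that a $2$-equivalent pair of words leading to distinct states exists iff $t$ is reachable from $s$. An alternative is to reduce from the NL-complete PT-testing problem of~\cite{ChoH91} by prefixing the input DFA with a fixed gadget that collapses any higher-depth PT-violation into a length-$2$ violation of $2$-equivalence.

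For NL-membership, by the Immerman--Szelepcs\'enyi theorem it suffices to show that \emph{non}-$2$-PT-ness is in NL. By Lemmas~\ref{lemma-bound-short} and~\ref{lemma-bound-long} a witness consists of a pair of words $w_1 \preccurlyeq w_2$, both of polynomial length, that are $2$-equivalent yet reach distinct states from $i$. The algorithm first checks in NL that $L(\A)$ is PT, using the partially-ordered plus confluent characterization of Fact~\ref{thm:characterization}; otherwise it rejects. Then, analogously to Theorem~\ref{thm1pt}, I would establish a structural characterization of $2$-PT in terms of constant-size conditions on the minimal DFA: $L(\A)$ is $2$-PT iff for every state $p$ and every bounded tuple of letters, a finite list of equalities of the form $p\cdot u = p\cdot v$ with $u \sim_2 v$ holds. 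Each such local condition is NL-verifiable by guessing $p$ and the letters, simulating $O(1)$ transitions, and comparing the reached states; the overall algorithm nondeterministically searches for a violation.

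The main obstacle is identifying the correct finite family of local patterns that captures $2$-PT-ness exactly. Storing $\fun{sub}_2(w)$ naively requires $\Theta(|\Sigma|^2)$ bits, so an on-the-fly comparison of the subword sets of two independently guessed words of length up to $2|\Sigma|^2 + \fun{depth}(\A)$ is not directly feasible in logarithmic space; we must replace global $2$-equivalence testing by a constant-size local check. Pinning down this family will require a careful case analysis of the minimal $\sim_2$-preserving rewrites between short words, in the spirit of conditions~(1)--(2) for the $1$-PT case but enriched to track length-$2$ subword information. Once such a characterization is obtained, both the upper bound and the matching lower bound fall out via standard NL-complexity routines.
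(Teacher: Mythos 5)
Your overall plan (a local, constant-size characterization for the NL upper bound, plus a reduction from a reachability-type NL-complete problem for hardness) is the same shape as the paper's argument, but in both halves the part you leave open is precisely the part that constitutes the proof. For membership, you correctly observe that the short-witness approach of Lemmas~\ref{lemma-bound-short} and~\ref{lemma-bound-long} does not fit into logarithmic space and that one needs a local characterization of $2$-piecewise testability; but you then declare finding that characterization to be ``the main obstacle'' and stop. That characterization is the heart of the paper's proof: $L(\A)$ is $2$-PT iff $\A$ is partially ordered and confluent and, for every $a\in\Sigma$ and every state $s$ reachable by a word containing $a$, $s\,ba = s\,aba$ for all $b\in\Sigma\cup\{\eps\}$ (Lemmas~\ref{lemma2ptNL} and~\ref{lem:equivCond2PT}); proving that this local condition is \emph{sufficient} requires the long case analysis with the commutativity claim and confluence, and nothing in your proposal substitutes for it. (An alternative you could have invoked is the known equational description of $2$-PT monoids, $xyzx=xyxzx$ and $(xy)^2=(yx)^2$, refuted in NL by guessing the words labelling the variables letter by letter and simulating on tuples of states, as the paper does for $k=3$; but you neither cite nor reconstruct such a family.)

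The hardness half also has a concrete gap. Reducing from plain GAP with ``one designated letter following the edges'' is problematic: a general digraph yields cycles, and with repeated letters the correctness of the reduction hinges on the direction you do not discuss, namely that when $t$ is \emph{not} reachable the constructed language really is $2$-PT. This is exactly why the paper reduces from \emph{monotone} 2MGAP (so the automaton is acyclic), labels every edge with a fresh letter (so that in the negative case every accepted word uses each letter at most once, and singleton languages of such words are $2$-PT), adds fresh transitions to force minimality of the DFA (the input must be a minimal DFA), and exhibits the explicit $2$-equivalent witness pair $awa \sim_2 awaa$ in the positive case. Your fallback of reducing from PT-testability via a ``fixed gadget that collapses higher-depth violations'' does not obviously exist: a language can be PT without being $2$-PT, so mapping ``is PT'' to ``is $2$-PT'' is not achieved by prefixing a gadget, and you give no construction. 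As it stands, neither the upper bound nor the lower bound is established.
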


  We first need the following lemma that states that for any two $k$-equivalent words that lead the automaton to two different states, there exist other two equivalent words leading the automaton to two different states, such that one word is a subword of the other and the words differ only by a single letter.

  \begin{lemma}\label{lemma-normal-form}
    Let $\A=(Q,\Sigma,\cdot,i,F)$ be a minimal DFA. For every $k\ge 0$, if $w_1 \sim_k w_2$ and $i w_1 \neq i w_2$, then there exist two words $w$ and $w'$ such that $w \sim_k w'$, $w'$ is obtained from $w$ by adding a single letter at some place, and $i w \neq i w'$.
  \end{lemma}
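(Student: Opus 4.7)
The plan is to reduce the general case to a subsequence configuration and then interpolate letter by letter. Given $w_1 \sim_k w_2$ with $iw_1 \ne iw_2$, the first step is to pass to a common $k$-equivalent supersequence. By the same tool invoked in the proof of Lemma~\ref{lemma-bound-short} (namely~\cite[Theorem~6.2.6]{SimonS97}), there exists $w^*$ with $w^* \sim_k w_1 \sim_k w_2$ such that both $w_1$ and $w_2$ are subwords of $w^*$. Since $iw_1 \ne iw_2$, at least one of $iw_1 \ne iw^*$ or $iw_2 \ne iw^*$ must hold, so, possibly after renaming, I may assume outright that $w_1 \preccurlyeq w_2$, $w_1 \sim_k w_2$, and $iw_1 \ne iw_2$.

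The second step is the interpolation. Since $w_1$ is a subword of $w_2$, fix a witness embedding and let $u_0 = w_1, u_1, u_2, \ldots, u_m = w_2$ be any chain where $u_{j+1}$ is obtained from $u_j$ by inserting exactly one of the letters of $w_2$ that does not belong to the witness, in order. Every $u_j$ then satisfies $w_1 \preccurlyeq u_j \preccurlyeq w_2$, which forces
\[
  \fun{sub}_k(w_1) \subseteq \fun{sub}_k(u_j) \subseteq \fun{sub}_k(w_2) = \fun{sub}_k(w_1),
\]
so all the $u_j$ are pairwise $k$-equivalent.

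The third step is a pigeonhole argument on the states. Consider the sequence of states $iu_0, iu_1, \ldots, iu_m$. We have $iu_0 = iw_1 \ne iw_2 = iu_m$, so there exists an index $j$ with $iu_j \ne iu_{j+1}$. Setting $w = u_j$ and $w' = u_{j+1}$ gives two words such that $w' $ is obtained from $w$ by adding a single letter, $w \sim_k w'$, and $iw \ne iw'$, as required.

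I do not expect a real obstacle here: the only nontrivial ingredient is the existence of the common $k$-equivalent supersequence, which is borrowed off the shelf from Simon; the rest is an elementary chain-and-pigeonhole observation combined with the monotonicity of $\fun{sub}_k$ under the subword order.
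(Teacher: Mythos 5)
Your proposal is correct and follows essentially the same route as the paper's proof: invoke Simon's common $k$-equivalent supersequence, reduce to the case where one word is a subword of the other with distinct reached states, interpolate by single-letter insertions (with $\fun{sub}_k$ sandwiched between equal sets), and conclude by pigeonhole on consecutive states. No gaps.
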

  \begin{proof}
    Let $w_1$ and $w_2$ be two words such that $w_1 \sim_k w_2$ and $i w_1 \neq i w_2$. Then, by~\cite[Theorem~6.2.6]{SimonS97}, there exists a word $w_3$ such that $w_1$ and $w_2$ are subwords of $w_3$, and $w_1 \sim_k w_2 \sim_k w_3$. Moreover, either $w_1$ and $w_3$, or $w_2$ and $w_3$, do not lead to the same state. Let $v, v' \in \{w_1, w_2, w_3\}$ be such that $v$ is a subword of $v'$ and $i v \neq i v'$. Let $v=u_0, u_1, \ldots, u_n = v'$ be a sequence such that $u_{i+1}$ is obtained from $u_i$ by adding a letter at some place. Such a sequence exists since $v$ is a subword of $v'$. If, for every $i$, $u_i$ and $u_{i+1}$ lead to the same state, then $v$ and $v'$ does as well. Thus, there must exist $i$ such that the words $u_i$ and $u_{i+1}$ lead to two different states and $u_i$ is obtained from $u_{i+1}$ by adding a letter at some place. Setting $w=u_i$ and $w'=u_{i+1}$ completes the proof, since $\fun{sub}_k(v)\subseteq \fun{sub}_k(w)\subseteq \fun{sub}_k(w')\subseteq \fun{sub}_k(v') = \fun{sub}_k(v)$.
  \qed\end{proof}

  \begin{lemma}\label{lemma2ptNL}
    Let $\A=(Q,\Sigma,\cdot,i,F)$ be a minimal partially ordered and confluent DFA. The language $L(\A)$ is 2-PT if and only if for every $a\in\Sigma$ and every states $p$ such that there exists $w$ with $|w|_a \geq 1$, $pua = paua$, for every $u\in\Sigma^*$.
  \end{lemma}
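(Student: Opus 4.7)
My plan is to treat the two directions separately; the $(\Rightarrow)$ implication is a short 2-subword calculation, while $(\Leftarrow)$ is a case analysis driven by Lemma~\ref{lemma-normal-form}.

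For $(\Rightarrow)$, I assume $L(\A)$ is 2-PT and fix $a\in\Sigma$, a state $p = iw$ with $|w|_a \geq 1$, and $u\in\Sigma^*$. The plan is to show $wua \sim_2 waua$, so that Fact~\ref{mainProperty} yields $pua = iwua = iwaua = paua$. The inclusion $\fun{sub}_2(wua) \subseteq \fun{sub}_2(waua)$ is trivial; for the reverse, any 2-subword of $waua$ missing from $\fun{sub}_2(wua)$ must use the inserted $a$ and is therefore of the form $ca$ with $c \in \alp(w)$---present in $\fun{sub}_2(wua)$ because $ua$ ends in $a$---or $ad$ with $d \in \alp(ua)$---present in $\fun{sub}_2(wua)$ because $|w|_a \geq 1$ places an $a$ in $w$ before every letter of $ua$.

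For $(\Leftarrow)$, I assume the condition and argue by contradiction. If $L(\A)$ is not 2-PT, Lemma~\ref{lemma-normal-form} supplies 2-equivalent words $w = xy$ and $w' = xay$ with $iw \neq iw'$, and $\alp(w) = \alp(w')$ forces $a \in \alp(xy)$. The main case I expect to dispatch cleanly is when $a$ appears in both $x$ and $y$: decompose $y = y_1 a y_2$ with $y_1 \in (\Sigma\setminus\{a\})^*$ and set $p = ix$; since $|x|_a \geq 1$, the hypothesis applies at $p$ with $u = y_1$ and yields $py_1 a = pay_1 a$, and hence $iw = p y_1 a y_2 = p a y_1 a y_2 = iw'$, contradicting $iw \neq iw'$.

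The remaining cases---$a \in \alp(x)\setminus\alp(y)$ and $a \in \alp(y)\setminus\alp(x)$---must be reduced to the main case. In each, the equality $\fun{sub}_2(w) = \fun{sub}_2(w')$ imposes additional combinatorial structure on $x$ and $y$ (e.g., every $c \in \alp(x)$ must satisfy $ca \in \fun{sub}_2(x)$ in the former case; and $|y|_a \geq 2$ together with $\alp(y_1) \subseteq \alp(y_2)$ in the latter, where again $y = y_1 a y_2$ with $y_1 \in (\Sigma\setminus\{a\})^*$). I would apply the condition at a state guaranteed to be reachable by a word containing $a$---typically $ix\cdot a$, reached by $xa$---and then propagate the resulting identity along the factorization using confluency and the partial-order structure of $\A$. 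The main obstacle lies exactly here: the condition only yields equalities of the form $qa = q'a$, whereas the desired contradiction needs $q = q'$; bridging this gap is where the hypotheses that $\A$ is partially ordered and confluent enter most critically, and this is the step I expect to consume the bulk of the technical work.
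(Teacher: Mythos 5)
Your forward direction is correct and is essentially the paper's computation (the paper phrases it contrapositively, using minimality of $\A$ and the fact that $\sim_2$ is a congruence, but the subword analysis is the same), and your reduction of the converse to single-letter insertions via Lemma~\ref{lemma-normal-form} is exactly the paper's first step. Your ``main case'' ($a$ occurring in both $x$ and $y$) is also handled exactly as in the paper: apply the hypothesis at $p=ix$ with $u=y_1$. The problem is that the two cases you leave open are not residual bookkeeping -- they carry the real content of this direction. The easier one, $a\in\alp(x)\setminus\alp(y)$, is close to what you sketch but still needs two ingredients you do not supply: writing $x=w_3aw_4$ with $a\notin\alp(w_4)$, the equality $\fun{sub}_2(w)=\fun{sub}_2(w')$ forces $aa\preccurlyeq x$, hence $a\in\alp(w_3)$, and the hypothesis applied at $p=iw_3$ with $u=\eps$ gives a self-loop under $a$ at $q=iw_3a$; one then needs a commutativity claim (the paper's Claim~\ref{claimCom}: $p\cdot ab=p\cdot ba$ whenever $p$ is reached by a word containing both $a$ and $b$, proved from the hypothesis together with the partial order) to slide that $a$ across $w_4$ and conclude $ix=ixa$, hence $iw=iw'$.

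The genuinely hard case is $a\in\alp(y)\setminus\alp(x)$ (the paper's case (A)), and there you stop precisely where the proof begins. Note that in this case the hypothesis cannot be applied at $ix$ at all, and applying it at $ix\cdot a$ is not directly useful since that state need not lie on the run of $w$. The paper's resolution is: write $y=u_1au_2$ with $a\notin\alp(u_1)$, deduce from $w\sim_2 w'$ that $B:=\alp(u_1a)\subseteq\alp(u_2)$, set $z_1=i\cdot xu_1a$ and $z_2=i\cdot xau_1a$, then (i) establish the commutativity claim above, (ii) prove by induction along a canonical factorization $u_2=x_1b_1x_2b_2\cdots x_\ell b_\ell x_{\ell+1}$ that from each $z_i$ the word $u_2$ leads to the same state as $(b_1\cdots b_\ell)^R$ followed by a suitable word, (iii) show that the states $z_i\cdot(b_1\cdots b_\ell)^R$ carry self-loops under every letter of $B$, and (iv) invoke $B$-confluency to conclude $z_1\cdot(b_1\cdots b_\ell)^R=z_2\cdot(b_1\cdots b_\ell)^R$, whence $iw=iw'$. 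Your diagnosis that ``the condition only yields $qa=q'a$ whereas we need $q=q'$'' does not capture this mechanism: the gap is bridged not by cancelling a trailing $a$ but by steering both runs into a common maximal state of the component over the subalphabet $B$ and applying confluency there. Since this case is acknowledged but left entirely unproved (and the $a\in\alp(x)\setminus\alp(y)$ case is only sketched), the proposal does not establish the ($\Leftarrow$) direction.
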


  \begin{proof}
    $(\Rightarrow)$ By contraposition. Assume that there exists $u\in\Sigma^*$ and a state $p$ such that $iw = p$ for some $w\in\Sigma^*$ containing $a$ and such that $p ua \neq p aua$. By the assumption, $w=w_1aw_2$, for some $w_1,w_2\in\Sigma^*$ such that $a\notin\alp(w_1)$, and we want to show that $w_1aw_2ua \sim_2 w_1aw_2aua$. However, for any $c\in\alp(w_1aw_2)$, if $ca\preccurlyeq w_1aw_2aua$, then $ca\preccurlyeq w_1aw_2ua$. Similarly for $d\in\alp(ua)$ and $ad\preccurlyeq w_1aw_2aua$. Since $i\cdot wua \neq i\cdot waua$, the minimality of $\A$ gives that there exists a word $v$ such that $wuav\in L(\A)$ if and only if $wauav\notin L(\A)$. Since $\sim_2$ is a congruence, $wuav \sim_2 wauav$, which violates Fact~\ref{mainProperty}, hence $L(\A)$ is not 2-PT.
    
    $(\Leftarrow)$ 
    Let $w_1$ and $w_2$ be two words such that $w_1 \sim_2 w_2$. We want to show that $i w_1 = i w_2$. By Lemma~\ref{lemma-normal-form}, it is sufficient to show this direction of the theorem for two words $w$ and $w'$ such that $w'$ is obtained from $w$ by adding a single letter at some place. Thus, let $a$ be the letter, and let
    \[
      w = a_1 \ldots a_k a_{k+1} \ldots a_n \text{ and } w' = a_1 \ldots a_k a a_{k+1} \ldots a_n
    \]
    for $0\le k\le n$. Let $w_{i,j}=a_ia_{i+1}\ldots a_j$. We distinguish two cases.
    
    (A) Assume that $a$ does not appear in $w_{1,k}$. Then $a$ must appear in $w_{k+1,n}$. Consider the first occurrence of $a$ in $w_{k+1,n}$. Then $w_{k+1,n} = u_1 a u_2$, where $a$ does not appear in $u_1$. Let $B=\alp(u_1a)$. Then $B\subseteq \alp(u_2)$, because if there is no $a$ in $w_{1,k}u_1$, any subword $ax$, for $x\in B$, that appears in $w'=w_{1,k} a u_1 a u_2$ must also appear in the subword $au_2$ of $w=w_{1,k} u_1 a u_2$. 
    
    Let $u_2 = x_1 b_1 x_2 b_2 x_3 \ldots x_\ell b_\ell x_{\ell+1}$, where $B=\{b_1,b_2,\ldots,b_\ell\}$ and $b_j$ does not appear in $x_1 b_1 x_2 \ldots x_j$, $j=1,2,\ldots,\ell$. Let $v=b_1b_2\ldots b_\ell$. Let $z \in \{ i\cdot w_{1,k} u_1 a$, $i\cdot w_{1,k} a u_1 a\}$. We prove (by induction on $j$) that for every $j=1,2,\ldots,\ell$, there exists a word $y_j$ such that 
    $
      z \cdot (b_1b_2\ldots b_j)^R y_j = z \cdot x_1 b_1 x_2 b_2 x_3 \ldots x_j b_j x_{j+1}.
    $
    Since $b_1$ appears in $u_1$, we use the assumption from the statement of the theorem to obtain $(z\cdot x_1 b_1) \cdot x_2 = (z\cdot b_1 x_1 b_1) \cdot x_2$. Assume that it holds for $j<k$. We prove it for $j+1$. Again, $b_{j+1}$ appears in $u_1$ implies that
    \begin{align*}
      z \cdot x_1 b_1 x_2 b_2 x_3 \ldots x_j b_j x_{j+1} b_{j+1} x_{j+2}
      & = ((z \cdot x_1 b_1 x_2 b_2 x_3 \ldots x_j b_j x_{j+1}) b_{j+1}) x_{j+2} \\
      & = ((z \cdot b_j\ldots b_2b_1 y_j) b_{j+1}) x_{j+2} \\
      & = z \cdot b_{j+1} \underline{b_j\ldots b_2b_1 y_j b_{j+1}} x_{j+2}
    \end{align*}
    where the second equality is by the induction hypothesis and the third is by the assumption from the statement of the theorem applied to the underlined part.
    Thus, in particular, there exists a word $y$ such that $i\cdot w_{1,k} u_1 a v^R y = i\cdot w$ and $i\cdot w_{1,k} a u_1 a v^R y = i\cdot w'$.
    
    Finally, let $z_1 = i\cdot w_{1,k} u_1 a$ and $z_2 = i\cdot w_{1,k} a u_1 a$. We prove that $z_1 \cdot v^R = z_2 \cdot v^R$, which then concludes the proof since it implies that $i\cdot w = i\cdot w'$. To prove this, we make use of the following claim.
    
    \begin{claim}[Commutativity]\label{claimCom}
      For every $a,b\in\Sigma$ and every state $p$ such that $i\cdot w = p$ and $a$ and $b$ appear in $w$, $p\cdot ab = p \cdot ba$.
    \end{claim}
    \begin{proof}
      By the assumption of the theorem, since $a$ appears in $w$, $p\cdot ba = p\cdot aba = q_1$. Similarly, since $b$ appears in $w$, we also have $p\cdot ab = p\cdot bab = q_2$. Then $q_2 \cdot a = (p\cdot ab)a = q_1$ and $q_1\cdot b = (p\cdot ba) b = q_2$. Since the automaton is partially ordered, $q_1=q_2$.
    \hfill$\diamond$\end{proof}
    
    We can now finish the proof by induction on the length of $v^R=b_\ell \ldots b_2 b_1$ by showing that the state $z_i' = z_i\cdot b_\ell \ldots b_2 b_1$ has self-loops under $B$, $i=1,2$. Let $z_i\xrightarrow{b_\ell \ldots b_2 b_1} z_i' = q_{i,\ell+1} b_\ell q_{i,\ell} b_{\ell-1} q_{i,\ell-1} \ldots q_{i,2} b_1 q_{i,1}$ denote the path defined by the word $v^R$ from the state $z_i$, $i=1,2$. 
    
    \begin{claim}
      Both states $z_1'$ and $z_2'$ have self-loops under all letters of the alphabet $B$.
    \end{claim}
    \begin{proof}
      Indeed, $q_{i,j}\cdot b_j = q_{i,j+1}\cdot b_j b_j = q_{i,j+1}\cdot b_j = q_{i,j}$, where the second equality is by the assumption from the statement of the theorem, since $b_j$ appears in $u_1$. Thus, there is a self-loop in $q_{i,j}$ under $b_j$. 
      
      Then, we have $z_i' = q_{i,1} = q_{i,1} b_1 = z_i' b_1$. Now, for every $j=2,\ldots,\ell$, we have $z_i' = q_{i,1} = q_{i,j} \cdot b_{j-1} \ldots b_2 b_1 = q_{i,j} \cdot b_j b_{j-1} \ldots b_2 b_1 = q_{i,j} \cdot b_{j-1} \ldots b_2 b_1 b_j = z_i' b_j$, where the third equality is because there is a self-loop in $q_{i,j}$ under $b_j$, and the fourth is by several applications of commutativity (Claim~\ref{claimCom} above).
    \hfill$\diamond$\end{proof}

    Thus, since no other states are reachable from $z_1'$ and $z_2'$ under $B$, and $z_1'$ and $z_2'$ are reachable from $i\cdot w_{1,k}$ by words over $B$, confluency of the automaton implies that $z_1'=z_2'$, which completes the proof of part (A). 
    
    \medskip
    (B) If $a = a_i$ for some $i \leq k$, we consider two cases. First, assume that for every $c\in\Sigma\cup\{\eps\}$, $ca$ is a subword of $w_{1,k}a$ implies that $ca$ is a subword of $w_{1,k}$. Then $aa$ is a subword of $w_{1,k}$. Let $w_{1,k}=w_3aw_4$, where $a$ does not appear in $w_4$. Let $q = i\cdot w_3 a$, and let $B=\alp(w_4)$. Note that $B\subseteq \alp(w_3)$, since if $xa$ is a subword of $w_{1,k}a$, then it is also in $w_3a$. By the assumption of the theorem, $q = i \cdot w_3 a = i \cdot w_3 a a$, hence we get that there is a self-loop in $q$ under $a$. Now, by the self-loop under $a$ in $q$ and commutativity (Claim~\ref{claimCom} above), 
    $q \cdot w_4 = q \cdot a w_4 = q\cdot w_4 a$. Thus, $i\cdot w_{1,k} = i\cdot w_{1,k} a$.
    
    Second, assume that there exists $c$ in $w_{1,k}$ such that $ca\preccurlyeq w_{1,k}a$ is not a subword of $w_{1,k}$. Then $a$ must appear in $w_{k+1,n}$. Together, there exist $i \le k < j$ such that $a_i=a_j=a$. By the assumption of the theorem, we obtain that $i\cdot w_{1,k} a w_{k+1,j} = i\cdot w_{1,k} w_{k+1,j}$, since $w_{k+1,j}= x a$, for some $x\in\Sigma^*$. This implies that $i\cdot w = i\cdot w'$.

    This completes the proof of part (B) and, hence, the whole proof.
  \qed
  \end{proof}
  
  This result gives a PTIME algorithm to decide whether a minimal DFA recognizes a 2-PT language. 
  However, our aim is to show that the problem is NL-complete. To show that the problem is in NL, we need the following lemma, which gives a characterization of 2-PT languages that can be verified locally in nondeterministic logarithmic space, and provides a quadratic-time algorithm.

  \begin{lemma}\label{lem:equivCond2PT}
    Let $\A=(Q,\Sigma,\cdot,i,F)$ be a DFA. Then the following conditions are equivalent:
      \begin{enumerate}
      \item For every $a\in\Sigma$ and every state $s$ such that $i w = s$ for some $w\in\Sigma^*$ with $|w|_a\ge 1$, $s ua = saua$, for every $u\in\Sigma^*$.
  
      \item For every $a\in\Sigma$ and every state $s$ such that $i w = s$ for some $w\in\Sigma^*$ with $|w|_a\ge 1$, $s ba = s aba$ for every $b\in\Sigma\cup\{\eps\}$.
    \end{enumerate}
  \end{lemma}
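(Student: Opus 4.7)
The implication (1)$\Rightarrow$(2) is immediate, since specializing $u$ in condition~(1) to a single letter $b\in\Sigma$ gives $sba = saba$, and specializing to $u=\eps$ gives $sa = saa$, which is the $b=\eps$ case of condition~(2). So the real content is the converse.

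For (2)$\Rightarrow$(1), I would argue by induction on $|u|$, proving the statement simultaneously for all states $s$ satisfying the precondition. The base case $u=\eps$ is just $sa = saa$, which is condition~(2) with $b=\eps$. For the inductive step, write $u = bu'$ with $b\in\Sigma$ and $|u'|=|u|-1$, and chain three equalities together:
\begin{align*}
  sbu'a &= sbau'a \quad \text{(IH applied at state } sb \text{ with word } u'\text{)}\\
        &= sabau'a \quad \text{(condition~(2) at state } s \text{ with letter } b\text{, followed by } u'a\text{)}\\
        &= sabu'a \quad \text{(IH applied at state } sab \text{ with word } u'\text{).}
\end{align*}
The resulting equality $sua = saua$ is exactly what we need.

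The only subtlety is to check that the inductive hypothesis is legitimately applicable at the auxiliary states $sb$ and $sab$. For this I would observe that if $s = iw$ with $|w|_a\ge 1$, then $sb = i(wb)$ and $sab = i(wab)$ are both reached by words still containing at least one occurrence of $a$, so the precondition of condition~(1) is preserved and the induction goes through. The argument uses only equalities of states; no confluence or partial order of $\mathcal{A}$ is needed, which makes the lemma a purely local reformulation. I do not expect any real obstacle here — the whole point of the lemma is to replace the universally quantified $u\in\Sigma^*$ by a single letter (or $\eps$), which is exactly the reduction that makes the check doable in logarithmic space in Lemma~\ref{lemma2ptNL}'s algorithmic consequence.
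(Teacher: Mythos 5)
Your proof is correct and takes essentially the same route as the paper: an induction on $|u|$ that chains condition~(2) with the inductive hypothesis, the only difference being that you peel off the \emph{first} letter of $u$ (so the IH is invoked at the shifted states $sb$ and $sab$, which you correctly verify are still reached by words containing $a$), whereas the paper writes $u=u'b$, applies the IH at $s$ itself, and uses condition~(2) at $su'$ and $sau'$. Both versions are purely equational and need neither confluence nor partial order, as you note.
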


  \begin{proof}

    ($1\Rightarrow 2$) 2. is a special case of 1. where $u = b$.  
    
    ($2\Rightarrow 1$) We prove this direction by induction on the length of $u$. Let $a\in \alp(w)$ such that $iw =s$. If $u = \eps$, then we take $b = \eps$. Otherwise, we have $u = u'b$. By induction hypothesis, we have $su'a = sau'a$. Thus $sua = su'ba = (su')ba = (su')aba = (su'a)ba = (sau'a)ba = (sau')ba = saua$.
  \qed 
  \end{proof}

  \begin{proof}[of Theorem~\ref{thm2ptNL}]
    The check of whether a minimal DFA is {\em not} confluent or does {\em not} satisfy condition~2 of Lemma~\ref{lem:equivCond2PT} can be done in NL; the reader is referred to~\cite{ChoH91} for a proof how to check confluency in NL. Since NL=co-NL~\cite{Immerman88,Szelepcsenyi87}, we have an NL algorithm to check 2-piecewise testability of a minimal DFA. NL-hardness follows from the following lemma.
  \qed\end{proof}

  \begin{lemma}
    For every $k\ge 2$, the $k$-PT problem is NL-hard.
  \end{lemma}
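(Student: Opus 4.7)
My plan is to prove NL-hardness by a logspace reduction from the (complement of) $s$-$t$-reachability in directed acyclic graphs; since this problem is NL-complete by the Immerman--Szelepcs\'enyi theorem~\cite{Immerman88,Szelepcsenyi87}, a logspace reduction suffices. Given a DAG $G=(V,E)$ with source $s$ and target $t$, I would construct in logarithmic space a minimal DFA $\A_G$ over a polynomial-size alphabet such that $L(\A_G)$ is $k$-PT if and only if $t$ is not reachable from $s$ in $G$.

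The construction idea is driven by Fact~\ref{mainProperty}: $L(\A_G)$ is $k$-PT iff every two $\sim_k$-equivalent words lead to the same state. I would design $\A_G$ so that each directed edge of $G$ is realised by a dedicated input letter and so that an $s$-$t$-path naturally produces a pair of words $w_1, w_2$ with $w_1 \sim_k w_2$ but $iw_1 \ne iw_2$: intuitively, one word takes a ``direct'' route while the other inserts a detour along the path, both calibrated to produce the same set of length-$\le k$ subwords via carefully chosen self-loop letters that pad the subword profile. Conversely, when no $s$-$t$-path exists, no such detour is available and one verifies that the remaining structure of $\A_G$ is confluent and partially ordered (Fact~\ref{thm:characterization}) and satisfies the $k$-specific local condition (of which Lemma~\ref{lemma2ptNL} is the $k=2$ instance), yielding $k$-PT-ness. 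For $k=2$, the construction essentially uses $V$ as the backbone of states, auxiliary accept/reject states, and edge-labels chosen so that Lemma~\ref{lemma2ptNL}'s equation $sua=saua$ can fail at a state exactly when that state participates in a surviving $s$-$t$-path; for $k\ge 3$ the same idea is parameterised by requiring the witness words to trace length-$k$ subword histories of the path.

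The main obstacle is executing this blueprint uniformly for every $k\ge 2$. The construction must be tuned so that (i) $\A_G$ is minimal, with every state reachable and pairwise distinguishable (which can be arranged by attaching dedicated distinguishing suffixes and a dead-end letter to each state), (ii) the witness pair $(w_1,w_2)$ produced by an $s$-$t$-path is \emph{precisely} $\sim_k$-equivalent, with no spurious shorter subword discrepancy, and (iii) in the path-free case, no other $\sim_k$-equivalent pair landing in distinct states exists. Points (ii) and (iii) require careful combinatorial control over subwords of length up to $k$; proving (iii) is the hardest step, since it amounts to deriving a global structural property of $\A_G$ from a purely local absence-of-path hypothesis on $G$, and is what most constrains the choice of letters and self-loops in the gadget.
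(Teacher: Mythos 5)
Your high-level strategy coincides with the paper's: a logspace reduction from (non-)reachability in an acyclic/monotone graph (the paper uses the NL-complete problem 2MGAP and the equivalence NL${}={}$co-NL), with edges realised by fresh letters and with the target language being $k$-PT exactly when the target vertex is unreachable. However, the proposal stops at a blueprint precisely where the actual work lies, and the gadget you sketch is problematic. You propose to obtain the violating pair $w_1\sim_k w_2$, $iw_1\neq iw_2$, from a ``direct route'' versus a ``detour along the path, calibrated via self-loop letters that pad the subword profile.'' With a dedicated letter per edge, the detour word contains letters that the direct word does not, so already $\fun{sub}_1(w_1)\neq\fun{sub}_1(w_2)$; repairing this with extra self-loops would destroy the properties you need in the no-path case (finiteness/partial order of the reachable structure, minimality, and the absence of spurious $\sim_k$-equivalent pairs reaching distinct states), and you explicitly concede that this last point, your item (iii), is unproved. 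That item is not a technicality: it is the heart of the lemma, and without a concrete automaton one cannot verify it.

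The paper avoids your difficulty with a different, much simpler mechanism: after the source gadget it prepends a single shared letter $a$ ($i\cdot a=s$) and appends a chain $g\to f_1\to\cdots\to f_{k-1}$ all labelled by the same $a$, with $f_{k-1}$ the unique accepting state, edges of $G$ labelled by fresh letters, extra fresh-letter transitions from $i$ and to $f_{k-1}$ to force minimality, and a sink for undefined transitions. If a path with label $w$ exists, then $awa^{k-1}\in L(\A)$ but $awa^{k}\notin L(\A)$, and $awa^{k-1}\sim_k awa^{k}$ holds for free because $w$ contains no $a$ and both words contain at least $k$ occurrences of $a$ -- no ``calibration'' is needed. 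For the converse, the paper does the work you left open: it describes $L(\A)$ explicitly as a finite set of words over pairwise-distinct letters (each such singleton is $2$-PT, since after a repeated letter the DFA sits in the sink, so the condition of Lemma~\ref{lemma2ptNL} holds trivially) together with words $w_ja^{k-1}$, and shows that any word $k$-equivalent to $w_ja^{k-1}$ must have the form $v_1a^{k-1}$ with $v_1a=w_ja$, hence reaches the same state. You would need to supply an argument of this kind (or an explicit construction for which it can be carried out) before your proposal constitutes a proof; as written, the reduction's correctness in both directions is asserted rather than established.
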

  \begin{proof}
    To prove NL-hardness, we reduce an NL-complete problem {\em monotone graph accessibility (2MGAP)\/}~\cite{ChoH91}, which is a special case of the graph reachability problem, to the $k$-piecewise testability problem. An instance of 2MGAP is a graph $(G,s,g)$, where $G=(V,E)$ is a graph with the set of vertices $V=\{1,2,\ldots,n\}$, the source vertex $s=1$ and the target vertex $g=n$, the out-degree of each vertex is bounded by 2 and for all edges $(u,v)$, $v$ is greater than $u$ (the vertices are linearly ordered).
    
    We construct the automaton $\A=(V\cup\{i,f_1,f_2,\ldots,f_{k-1},d\},\Sigma,\cdot,i,\{f_{k-1}\})$ as follows. For every edge $(u,v)$, we construct a transition $u\cdot a_{uv} = v$ over a fresh letter $a_{uv}$. Moreover, we add the transitions $i\cdot a = s$, $g\cdot a = f_1$ and $f_i\cdot a = f_{i+1}$, $i=1,2,\ldots,k-2$, over a fresh letter $a$. The automaton is deterministic, but not necessarily minimal, since some of the states may not be reachable from the initial state, or some states may be equivalent. To ensure minimality of the constructed automaton, we add, for each state $v\in V\setminus\{s\}$, new transitions from $i$ to $v$ under fresh letters, and for each state $v\in V\setminus\{g\}$, new transitions from $v$ to $f_{k-1}$ under fresh letters. All undefined transitions go to the sink state $d$. 
    
    \begin{claim}
      The automaton $\A$ is deterministic and minimal, and $L(\A)$ is finite.
    \end{claim}
    \begin{proof}
      Note that, by construction, all states are reachable from the initial state $i$ and can reach (except the sink state) the unique accepting state $f_{k-1}$. In addition, the automaton is deterministic and minimal, since every transition is labeled by a unique label (except for the transitions $i a = s$ and $g a^{k-1} = f_{k-1}$ labeled with the same letter), which makes the states non-equivalent. Finally, $L(\A)$ is finite because the monotonicity of the graph $(G,s,g)$ implies that the automaton does not contain a cycle nor a self-loop (but the sink state $d$).
      \hfill$\diamond$
    \end{proof}
    
    The following claim is needed to complete the proof.
    \begin{claim}
      Let $w$ be a word over $\Sigma$. If every $a$ from $\Sigma$ appears at most once in $w$, that is, $|w|_a \le 1$, then the language $\{w\}$ is 2-PT.
    \end{claim}
    \begin{proof}
      First, since the language $\{w\}$ is PT, the minimal DFA is partially ordered and confluent. Then the condition of Lemma~\ref{lemma2ptNL} is trivially satisfied, since, after the second occurrence of the same letter, the minimal DFA accepting $\{w\}$ is in the unique maximal non-accepting state.
      \hfill$\diamond$
    \end{proof}
    
    We now show that the language $L(\A)$ is $k$-PT if and only if $g$ is not reachable from $s$. 
    
    By contraposition, we assume that $g$ is reachable from $s$. Let $w$ be a sequence of labels of such a path from $s$ to $g$ in $\A$. Then the word $awa^{k-1}$ belongs to $L(\A)$ and $awa^{k}$ does not. However, $awa^{k-1} \sim_k awa^{k}$, which proves that the language $L(\A)$ is not $k$-PT. 
    
    If $g$ is not reachable from $s$, the language $L(\A)=\{au_1,au_2,\ldots,au_\ell$, $u_{\ell+1},\ldots, u_{\ell+s}\} \cup \{w_1a^{k-1},w_2a^{k-1},\ldots,w_ma^{k-1}\}$, where $u_i$ and $w_i$ are words over $\Sigma\setminus\{a\}$ that do not contain any letter twice. Then the first part is 2-PT by the previous claim, as well as the second part for $k=2$. It remains to show that, for any $k\ge 3$, the second part of $L(\A)$ is $k$-PT. Assume that $w_ja^{k-1} \sim_k w$, for some $1\le j \le m$ and $w\in\Sigma^*$. Then $w= v_1 a v_2 a \ldots a v_k$ for some $v_1,v_2,\ldots,v_k$ such that $|v_1\ldots v_k|_a = 0$. Since $|w_j|_a = 0$ and, for any letter $c$ of $v_2\cdots v_{k-1}$ (resp. $v_k$), the word $aca$ (resp. $a^{k-1}c$) can be embedded into $w_ja^{k-1}$, that is, into $a^{k-1}$, we have that $v_2\cdots v_k=\eps$, i.e., $w=v_1a^{k-1}$. Since $w_ja^{k-1} \sim_k v_1a^{k-1}$, we have that $w_ja = v_1a$ -- hence $w_ja^{k-1}$ and $w$ lead to the same state, concluding the proof.
  \qed\end{proof}

  It was shown in~\cite{BlanchetSadri89} that the syntactic monoids of 1-PT languages are defined by equations $x=x^2$ and $xy=yx$, and those of 2-PT languages by equations $xyzx=xyxzx$ and $(xy)^2=(yx)^2$. These equations can be used to achieve NL algorithms. However, our characterizations improve these results and show that, for 1-PT languages, it is sufficient to verify the equations $x=x^2$ and $xy=yx$ on letters (generators), and that, for 2-PT languages, equation $xyzx=xyxzx$ can be verified on letters (generators) up to the element $y$, which is a general element of the monoid. It decreases the complexity of the problems. Moreover, the partial order and (local) confluency properties can be checked instead of the equation $(xy)^2=(yx)^2$.

\paragraph{$3$-Piecewise Testability}
  The equations $(xy)^3=(yx)^3$, $xzyxvxwy=xzxyxvxwy$ and $ywxvxyzx=ywxvxyxzx$ characterize the variety of 3-PT languages~\cite{BlanchetSadri89}. Non-satisfiability of any of these equations can be check in the DFA in NL by guessing a finite number of states and the right sequences of transitions between them (in parallel, when labeled with the same labels). Thus, we have the following.
  \begin{theorem}\label{thm3ptNL}
    The problem to decide whether a minimal DFA recognizes a 3-PT language is NL-complete.
  \end{theorem}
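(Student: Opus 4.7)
The plan is to establish separately the NL upper bound and NL-hardness. NL-hardness is immediate from the preceding lemma, which shows that the $k$-PT problem is NL-hard for every $k\ge 2$, and in particular for $k=3$.

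For the upper bound I would first verify that $L(\mathcal{A})$ is piecewise testable, which is known to be in NL. Assuming this, the language is $3$-PT if and only if the syntactic monoid of $L(\mathcal{A})$ satisfies the three Blanchet-Sadri equations $(xy)^3=(yx)^3$, $xzyxvxwy=xzxyxvxwy$, and $ywxvxyzx=ywxvxyxzx$. Because $\mathcal{A}$ is minimal, its transition monoid is isomorphic to the syntactic monoid, so satisfaction of each equation $s=t$ amounts to the following: for every state $q\in Q$ and every assignment of words in $\Sigma^*$ to the variables, $q\cdot s = q\cdot t$. It suffices to place the negation (failure of at least one equation) in NL; then NL=co-NL yields the theorem.

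Next I would describe the NL procedure witnessing that a fixed equation $s=t$ fails. The procedure first guesses a starting state $q$, then guesses the intermediate states reached after each variable position on both sides of the equation. Because $s$ and $t$ have a fixed (constant) length, only a constant number of states must be stored at once, each using $O(\log|Q|)$ bits. Then, variable by variable, the procedure verifies that there exists a single word $w$ over $\Sigma$ that simultaneously realizes all the required transitions between the guessed intermediate states. Concretely, if the variable $v$ occurs at positions whose source states are $r_1,\ldots,r_m$ and whose target states are $r'_1,\ldots,r'_m$ (across both sides of the equation), the procedure maintains the $m$-tuple $(s_1,\ldots,s_m)$ initialized to $(r_1,\ldots,r_m)$, nondeterministically guesses the letters of $w$ one by one, and updates $s_i \mapsto s_i\cdot a$ in parallel under the chosen letter $a$; at a nondeterministically chosen point it halts and accepts if $(s_1,\ldots,s_m)=(r'_1,\ldots,r'_m)$. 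Finally, it checks that the guessed terminal states on the two sides are distinct, which in the minimal DFA $\mathcal{A}$ certifies $q\cdot s \ne q\cdot t$.

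The delicate point, and the one I expect to be the main obstacle in a careful write-up, is ensuring that a single word is used for every occurrence of the same variable while staying in logarithmic space. This is exactly what the parallel-streaming idea achieves: the word $w$ itself is never stored, only the bounded tuple of current states that are updated synchronously under each nondeterministic letter choice. Since each of the three equations has only a bounded number of variables and variable occurrences, the total space used is $O(\log|Q|)$, so the whole non-satisfiability check runs in NL as required.
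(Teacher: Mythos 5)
Your proposal is correct and follows essentially the same route as the paper: NL-hardness via the lemma for $k\ge 2$, and for the upper bound, checking failure of the Blanchet-Sadri equations for 3-PT in NL by guessing the constantly many intermediate states and verifying the transitions in parallel when they are labeled by the same variable, then invoking NL $=$ co-NL. Your parallel-streaming verification of a single word per variable is exactly the ``in parallel, when labeled with the same labels'' step that the paper only sketches.
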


\paragraph{$k$-Piecewise Testability}
  Even though \cite{BlanchetSadri94} provides a finite sequence of equations to define the $k$-PT languages over a fixed alphabet for any $k\ge 4$, the equations are more involved and it is not clear whether they can be used to obtain the precise complexity. So far, the $k$-piecewise testability problem can be shown to be NL-hard (for $k\ge 2$) and in co-NP, and it is open whether it tends rather to NL or to co-NP.\footnote{See the acknowledgement for the recent development.}

\section{Complexity of $k$-Piecewise Testability for NFAs}
  The {\em $k$-piecewise testability problem for NFAs\/} asks whether, given an NFA $\A$, the language $L(\A)$ is $k$-PT.
  A language is $0$-PT if and only if it is either empty or universal. Since the universality problem for NFAs is PSPACE-complete~\cite{GareyJ79}, the $0$-PT problem for NFAs is PSPACE-complete. Using the same argument as in~\cite{mfcs2014ex} then gives us the following result.
  \begin{proposition}\label{thm0ptnfa}
    For every integer $k\ge 0$, the problem to decide whether an NFA recognizes a $k$-PT language is PSPACE-hard.
  \end{proposition}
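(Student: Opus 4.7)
The strategy is a logspace reduction from NFA universality, which is PSPACE-complete by \cite{GareyJ79}. For $k=0$ this is already established in the paragraph immediately preceding the proposition. For $k\ge 1$, the plan is to recycle, with no modification, the construction used in \cite{mfcs2014ex} to show that the (unrestricted) PT problem for NFAs is PSPACE-hard, exploiting the fact that this construction already enjoys a dichotomy strong enough to cover every $k$ uniformly.

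Concretely, given an NFA $\A$ over $\Sigma$, one builds in logarithmic space an NFA $\A'$ over an enlarged alphabet $\Gamma\supseteq\Sigma$ with the property that $L(\A')=\Gamma^*$ whenever $L(\A)=\Sigma^*$, while $L(\A')$ fails to be piecewise testable whenever $L(\A)\neq\Sigma^*$. Since $\Gamma^*$ is trivially $0$-PT and therefore $k$-PT for every $k\ge 0$, whereas a non-PT language is certainly not $k$-PT for any $k$, the single instance $\A'$ is a valid reduction target for the $k$-PT problem simultaneously at every fixed $k\ge 0$. This yields PSPACE-hardness uniformly in $k$.

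What remains is to verify (or recall from \cite{mfcs2014ex}) the claimed dichotomy of the gadget. The yes direction should be immediate from the form of the construction: by design, every word of $\Gamma^*$ is accepted as soon as $L(\A)=\Sigma^*$. The no direction is the more delicate side: for every $m\ge 0$ one must exhibit a pair of $\sim_m$-equivalent words of $\Gamma^*$ whose membership in $L(\A')$ differs. Such pairs are typically produced by combining a fixed witness $u\in\Sigma^*\setminus L(\A)$ with two tails from the gadget that are $\sim_m$-equivalent but separated by $L(\A')$ because one tail takes the gadget's ``universality-check'' branch and the other does not.

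The principal obstacle is precisely this last point: one must ensure that in the yes case the construction really collapses to the full alphabet $\Gamma^*$, rather than to some $k$-dependent $k$-PT language; if that failed, the uniformity in $k$ would be lost and the gadget would need a small tweak (for instance an extra self-loop that immediately accepts on every letter of $\Gamma$ once the ``check has succeeded''). Once this point is confirmed, the proposition follows.
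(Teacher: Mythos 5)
Your proposal is correct and follows essentially the same route as the paper: the paper likewise disposes of $k=0$ via NFA universality (a language is $0$-PT iff it is empty or universal) and handles all $k\ge 0$ uniformly by ``the same argument as in''~\cite{mfcs2014ex}, whose reduction yields a universal language (hence $0$-PT, hence $k$-PT for every $k$) in the yes case and a non-PT language (hence not $k$-PT for any $k$) in the no case. Your closing caveat about verifying that the yes case really collapses to $\Gamma^*$ is exactly the property that construction provides, so nothing further is needed.
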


  Since $k$ is fixed, we can make use of the idea of Theorem~\ref{thmconp} to decide whether an NFA recognizes a $k$-PT language. The length of the word $w_2$ is now bounded by $2^n$, where $n$ is the number of states of the NFA. Guessing the word $w_2$ on-the-fly then gives that the $k$-piecewise testability problem for NFAs is in PSPACE.
  
  \begin{theorem}\label{thmPSPACE}
    The following problem is PSPACE-complete:
    \begin{itemize}[leftmargin=*]
      \itemsep0pt
      \item[] \textsc{Name: $k$-PiecewiseTestabilityNFA}
      \item[] \textsc{Input:} an NFA $\A$
      \item[] \textsc{Output: Yes} if and only if $\mathcal L(\A)$ is $k$-PT
    \end{itemize}
  \end{theorem}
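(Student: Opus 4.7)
PSPACE-hardness is immediate from Proposition~\ref{thm0ptnfa}, so the task is to place the problem in PSPACE. The plan is to combine the certificate argument underlying Theorem~\ref{thmconp} with an on-the-fly simulation of the subset construction, and then invoke Savitch's theorem.

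Let $n=|Q|$. I first observe that $L(\A)$ is not $k$-PT precisely when either (i)~$L(\A)$ is not PT at all, or (ii)~$L(\A)$ is PT but there exist two words $w_1, w_2$ with $w_1\sim_k w_2$, $w_1$ a subword of $w_2$, and a suffix $z$ that distinguishes them in $L(\A)$, i.e., exactly one of $w_1z, w_2z$ belongs to $L(\A)$. Case~(i) is decidable in PSPACE by~\cite{mfcs2014ex}. For case~(ii), I would apply Lemma~\ref{lemma-bound-short} and Lemma~\ref{lemma-bound-long} not to $\A$ itself but to the minimal DFA equivalent to $\A$, which has at most $2^n$ states and, since $L(\A)$ is PT, is acyclic (Fact~\ref{thm:characterization}). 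This yields $|w_1|\le k|\Sigma|^k$, $|w_2|\le k|\Sigma|^k+2^n$, and a distinguishing suffix $z$ of length at most $2^n$; crucially, the subset-construction DFA is never written down explicitly.

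Using these bounds I would design an NPSPACE procedure. First, invoke the PT-test of~\cite{mfcs2014ex}; if $L(\A)$ is not PT, accept. Otherwise, guess $w_1$ letter by letter, maintaining the reached subset $S_1=I\cdot w_1\subseteq Q$ and the set $\fun{sub}_k(w_1)\subseteq\Sigma^{\le k}$; both need only polynomial space since $k$ and $|\Sigma|$ contribute only $|\Sigma|^k$ entries and the length of $w_1$ is polynomial. Next, guess $w_2$ letter by letter, updating on the fly the subset $S_2\subseteq Q$, a pointer into $w_1$ certifying the subword relation, the growing set $\fun{sub}_k$ of the current prefix of $w_2$ (verified to remain inside $\fun{sub}_k(w_1)$), and a length counter bounded by $k|\Sigma|^k+2^n$, which fits in $O(n)$ bits. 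When $w_2$ is finished, check that all letters of $w_1$ have been consumed and that the two $\fun{sub}_k$ sets coincide. Finally, guess $z$ letter by letter up to length $2^n$, maintaining $S_1\cdot z$ and $S_2\cdot z$, and accept iff exactly one of them meets $F$ at termination. Every piece of bookkeeping uses polynomial space, so the procedure runs in NPSPACE, and NPSPACE $=$ PSPACE by Savitch's theorem.

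The main obstacle I anticipate is a careful justification that Lemmas~\ref{lemma-bound-short} and~\ref{lemma-bound-long} transfer cleanly to the exponential subset-construction DFA, and in particular that the condition ``$w_1$ and $w_2$ lead to two different states'' used for DFAs becomes, for NFAs, the existence of a distinguishing suffix $z$ that is itself within the allowed exponential length bound. Once this equivalence is verified, the remainder is routine polynomial-space bookkeeping on subsets of $Q$ and on elements of $\Sigma^{\le k}$.
\qed
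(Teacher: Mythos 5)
Your proposal is correct and follows essentially the same route as the paper: guess a polynomially long $w_1$, guess $w_2$ of length at most $|w_1|+2^n$ letter by letter while tracking subsets of states and subwords of length at most $k$ (using the bounds of Lemmas~\ref{lemma-bound-short} and~\ref{lemma-bound-long} transferred to the determinization), with hardness from Proposition~\ref{thm0ptnfa}. Your explicit PT pre-check and the guessed distinguishing suffix $z$ merely spell out what the paper leaves implicit in ``simulating the minimal DFA on-the-fly and comparing the states reached by $w_1$ and $w_2$'', so this is a faithful, slightly more detailed, version of the same argument.
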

  \begin{proof}
    Let $\A$ be an NFA over the alphabet $\Sigma$. Let $\A'$ denote the minimal DFA obtained from $\A$ by the standard subset construction and minimization. By Theorem~\ref{thmconp}, and since it is well known that NPSPACE=PSPACE=co-PSPACE, we can guess and store a word $w_1$ of length at most $k|\Sigma|^k$ and to enumerate and store all words of length at most $k$. There are $\sum_{i=1}^{k} |\Sigma|^i$ such words, which is polynomial, since $k$ is a constant. First, we mark all of these words that appear as subwords of $w_1$. Then we guess (letter by letter) a word $w_2$ such that $w_1$ is a subword of $w_2$ (which can be checked by keeping a pointer to $w_1$) and such that the length of $w_2$ is at most $|w_1|+2^{n} = O(2^{n})$, where $n$ is the number of states of the NFA. With each guess of the next letter of $w_2$, we correspondingly move all the pointers to all the stored subwords to keep track of all subwords of $w_2$. We accept if $w_1$ and $w_2$ have the same subwords, $w_1$ is a subword of $w_2$, and $w_1$ and $w_2$ lead the minimal DFA $\A'$ to two different states. Note that because of the space limits the minimal DFA $\A'$ cannot be stored in the memory, but must be simulated on-the-fly while the word $w_2$ is being guessed. The state of $\A'$ defined by the word $w_2$ can then be compared with the state of $\A'$ defined by the word $w_1$, which is either computed at the end or stored from the beginning.
  \qed\end{proof}

  The problem to find the minimal $k$ for which the language recognized by an NFA is $k$-PT is PSPACE-hard, since a language is PT if and only if there exists a minimal $k\ge 0$ for which it is PT.

\section{Piecewise Testability and the Depth of NFAs}\label{secDepth}
  In this section, we generalize a result valid for DFAs to NFAs and investigate the relationship between the depth of an NFA and the minimal $k$ for which its language is $k$-PT. We show that the upper bound on $k$ given by the depth of the minimal DFA can be exponentially far from such a minimal $k$. More specifically, we show that for every $k\ge 0$, there exists a $k$-PT language $L$ recognized by an NFA $\A$ of depth $k-1$ and by the minimal DFA $\D$ of depth $2^k-1$. 

  Recall that a regular language is PT if and only if its minimal DFA satisfies some properties that can be tested in a quadratic time, cf. Fact~\ref{thm:characterization}.
  We now show that this characterization generalizes to NFAs. We say that an NFA $\A$ over an alphabet $\Sigma$ is {\em complete\/} if for every state $q$ of $\A$ and every letter $a$ in $\Sigma$, the set $q\cdot a$ is nonempty, that is, in every state, a transition under every letter is defined.
  
  \begin{theorem}\label{thm10}
    A regular language is PT if and only if there exists a complete NFA that is partially ordered and satisfies the UMS property.
  \end{theorem}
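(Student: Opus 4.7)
The forward direction is immediate from Fact~\ref{thm:characterization}: if $L$ is PT, its minimal DFA is partially ordered and satisfies the UMS property, and since by definition a DFA here has a total transition function, it is already a complete NFA of the required form.

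For the backward direction, let $\A = (Q, \Sigma, \cdot, I, F)$ be a complete NFA that is partially ordered and satisfies UMS. The plan is to apply the subset construction to $\A$, restricted to reachable subsets, to obtain a DFA $\D$ equivalent to $\A$, and then to show that $\D$ is itself partially ordered and satisfies UMS. By Fact~\ref{thm:characterization} applied to the minimization of $\D$, this yields that $L(\A)=L(\D)$ is PT.

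To establish partial order of $\D$, I would suppose a hypothetical cycle $S \xrightarrow{u} T \xrightarrow{v} S$ with $S \ne T$. Since $S\cdot uv = S$, each $p\in S$ is reached from some $p'\in S$ via a path $p' \xrightarrow{u} r \xrightarrow{v} p$ in $\A$ with $r\in T$, yielding $p' \le p$. Iterating backward inside the finite set $S$ produces a fixed point $p^* = p^* \cdot uv$ through some $r^*\in T$; partial order of $\A$ then forces $p^* = r^*$, and in particular $p^*\in S\cap T$. The real work is to promote this to $S=T$, where the UMS property of $\A$ is invoked to rule out the situation where some $p\in S$ maps by $u$ into a ``different'' maximal branch that would allow $T$ to strictly deviate from $S$.

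For UMS of $\D$, I would define the self-loop alphabet $\Sigma(S) = \{a\in\Sigma : S\cdot a = S\}$ of a subset state $S$. Then $\Sigma(S)\subseteq \Sigma(p)$ for every $p\in S$, so by UMS of $\A$ each $p\in S$ lies in a component of $G(\A, \Sigma(S))$ with a unique maximum $\hat p$. A decomposition argument identifies $S$ with $\{\hat p : p\in S\}$ and shows that $S$ is the unique maximum of its own component in $G(\D, \Sigma(S))$. I expect the main obstacle to be partial order preservation, and specifically closing the gap from $S\cap T\ne\emptyset$ to $S=T$; this will likely hinge on an auxiliary lemma asserting that in a complete partially ordered NFA with UMS, a letter action that stabilizes a reachable subset acts as the identity on each of its elements, so that no hidden nondeterministic branch can escape.
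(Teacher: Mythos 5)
Your overall plan is the same as the paper's (forward direction via Fact~\ref{thm:characterization}; then determinize and show that partial order and the UMS property survive), but both preservation arguments are left at the level of a conjecture, and what you defer is precisely the technical core of the paper's proof. For partial order, your backward iteration only yields some $p^*\in S$ with $p^*\in p^*\cdot(uv)^m$, hence an element of $S\cap T$; you then postpone ``the real work'' to an auxiliary lemma stating that a word stabilizing a reachable subset fixes each of its elements. That lemma is exactly what the paper proves, and it is not a soft consequence of what you already have: the paper enumerates $X=\{p_1,\dots,p_n\}$ along a linear extension of the reachability order of $\A$ and shows by induction that $p_k\cdot w=\{p_k\}$, using (i) that the smaller states already satisfy $p_i\cdot w=p_i$ while every state of $p_j\cdot w$ lies above $p_j$, which forces $p_k\in p_k\cdot w$; (ii) partial order of $\A$, which turns this cycle into $\alp(w)\subseteq\Sigma(p_k)$; and (iii) the UMS property of $\A$, which collapses $p_k\cdot w$ to $\{p_k\}$. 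Without this induction you cannot pass from $S\cap T\neq\emptyset$ to $S=T$; note also that your UMS part silently relies on the same unproven lemma the moment you assert $\Sigma(S)\subseteq\Sigma(p)$ for all $p\in S$.

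In the UMS part there is, in addition, a step that does not follow as stated: from $\Sigma(S)\subseteq\Sigma(p)$, the UMS property of $\A$ gives a unique maximal state only for the component of $G(\A,\Sigma(p))$ containing $p$; for the smaller alphabet $\Sigma(S)$ that component may split and a priori have several maximal states, so the map $p\mapsto\hat p$ on which your ``decomposition argument'' rests is not well defined without further work. The paper instead argues by contradiction: given two distinct maximal states $X\neq Y$ of $G(\D,\Sigma(X))$ reachable from a common $Z$ by words $u,v\in\Sigma(X)^*$, it picks $x\in X\setminus Y$ and $z\in Z$ with $x\in z\cdot u$, and compares $x$ with the states in $z\cdot v$ using partial order and UMS of $\A$; completeness of $\A$ is used exactly there to guarantee $z\cdot v\neq\emptyset$, and the fact that your sketch never uses completeness anywhere is a warning sign. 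A smaller point: you invoke Fact~\ref{thm:characterization} for the minimization of $\D$, so you would additionally have to argue that minimization preserves partial order and UMS, or, as the paper does, run the two claims directly on the minimal DFA with its states represented by subsets.
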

  \begin{proof}
    ($\Rightarrow$) If a regular language is PT, then its minimal DFA
    is partially ordered and satisfies the UMS property by~\cite{Trahtman2001}.
    
    ($\Leftarrow$) To prove the other direction, let $\A=(Q,\Sigma,\cdot,I,F)$ be a complete partially ordered NFA such that it satisfies the UMS property. Let $\D$ be the minimal DFA computed from $\A$ by the standard subset construction and minimization. We represent every state of $\D$ by a set of states of $\A$. 
    
    \begin{claim}\label{claimDminDFA}
      The minimal DFA $\D$ is partially ordered.
    \end{claim}
    \begin{proof}
      Let $X=\{p_1,p_2,\ldots,p_n\}$ with $p_i < p_j$ for $i<j$ be a state of $\D$, and let $w\in\Sigma^*$ be such that $X\cdot w = X$. By induction on $k=1,2,\ldots,n$, we show that $p_i w = p_i$. Assume that for all $i<k$, it holds that $p_i w = p_i$. We prove it for $k$. Since $X=\{p_1,p_2,\ldots,p_n\} = Xw =\cup_{i=1}^{n} p_iw$, $p_k \le p_k w$ and $p_i w = p_i$ for $i<k$, we have that $p_k \in p_k w$. Thus, $\alp(w)\subseteq\Sigma(p_k)$ and the UMS property of $\A$ implies that $p_k w = p_k$. Therefore, for every $a\in\alp(w)$ and $i=1,2,\ldots,n$, $p_i a = p_i$. If, for any state $Y$ of $\D$, $X w_1 = Y$ and $Y w_2 = X$, the previous argument gives that $X=Y$, hence $\D$ is partially ordered.
    \hfill$\diamond$\end{proof}
    
    \begin{claim}
      The minimal DFA $\D$ satisfies the UMS property.
    \end{claim}
    \begin{proof}
      Assume, for the sake of contradiction, that there exist two different states $X$ and $Y$ in the same component of $\D$ that are maximal with respect to the alphabet $\Sigma(X)$. That is, there exist a state $Z$ in $\D$ and two words $u$ and $v$ over $\Sigma(X)$ such that $X = Zu$ and $Y = Zv$. Without loss of generality, we may assume that there exists a state $x$ in $X\setminus Y$. Let $z$ in $Z$ be such that $x = zu$. Since $x$ does not belong to $Y$, $zv\neq x$. Note that $zv$ is defined, since $\A$ is complete. By the proof of the previous claim, $\Sigma(X)\subseteq \Sigma(zv)$ and $\Sigma(X)\subseteq \Sigma(x)$. If $x$ is not reachable from $zv$ by $\Sigma(x)$, we have a contradiction with the UMS property of $\A$. Thus, assume that $zv$ reaches $x$ under $\Sigma(x)$, that is, $zv \le x$. If $x$ does not reach $zv$ under $\Sigma(zv)$, then $zv$ and a maximal state of $x\cdot \Sigma(zv)^*$ are two different maximal states in $\A$, a contradiction. If $x$ reaches $zv$ under $\Sigma(zv)$, then $x \le zv$, which implies, since the NFA is partially ordered, that $zv=x$, which is again a contradiction. 
    \hfill$\diamond$\end{proof}
    Thus, we have shown that the minimal DFA $\D$ is partially ordered and satisfies the UMS property. Fact~\ref{thm:characterization} now completes the proof.
  \qed\end{proof}

  As it is PSPACE-complete to decide whether an NFA defines a PT language, it is PSPACE-complete to decide whether, given an NFA, there is an equivalent complete NFA that is partially ordered and satisfies the UMS property.

\subsection{Exponential Gap between $k$ and the Depth of DFAs}
  It was shown in~\cite{KlimaP13} that the depth of minimal DFAs does not correspond to the minimal $k$ for which the language is $k$-PT. Namely, an example of $(4\ell-1)$-PT languages with the minimal DFA of depth $4\ell^2$, for $\ell > 1$, has been presented. We now show that there is an exponential gap between the minimal $k$ for which the language is $k$-PT and the depth of a minimal DFA.
  
  \begin{theorem}\label{thmEXP}
    For every $n\ge 2$, there exists an $n$-PT language that is not $(n-1)$-PT, it is recognized by an NFA of depth $n-1$, and the minimal DFA recognizing it has depth $2^n-1$.
  \end{theorem}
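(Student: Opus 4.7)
The plan is to exhibit, for each $n \geq 2$, an explicit NFA $\A_n$ witnessing all three conditions. For the base case $n=2$, I would take $L_2 = a^+ b^*$ over $\{a,b\}$ recognised by the two-state NFA $\A_2$ with initial state $q_1$, accepting state $q_2$, and transitions $q_1 \xrightarrow{a} q_1$, $q_1 \xrightarrow{a} q_2$, $q_2 \xrightarrow{b} q_2$: the NFA has depth~$1$; the subset construction yields the four pairwise distinguishable states $\{q_1\},\{q_1,q_2\},\{q_2\},\emptyset$ arranged in the simple chain
\[
  \{q_1\} \xrightarrow{a} \{q_1,q_2\} \xrightarrow{b} \{q_2\} \xrightarrow{a} \emptyset
\]
of length $3 = 2^2-1$; and $L_2 = \Sigma^* a \Sigma^* \setminus \Sigma^* b a \Sigma^*$ is visibly 2-PT, while the pair $ab \in L_2$, $ba \notin L_2$ with $\alp(ab)=\alp(ba)$ shows $L_2$ is not 1-PT.

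For general $n$ I would generalise by taking an NFA with states $\{q_1,\ldots,q_n\}$ over an alphabet $\Sigma_n$ of size $n$, and enumerating the $2^n$ subsets of $\{q_1,\ldots,q_n\}$ in a specific chain order $T_0 = \{q_1\}, T_1, \ldots, T_{2^n-1} = \emptyset$ whose self-loop alphabets $\Sigma(T_i)$ run through all $2^n$ subsets of $\Sigma_n$ in a Gray-code-like order so that each consecutive pair $(\Sigma(T_i),\Sigma(T_{i+1}))$ differs by the addition of some letter $x_i \in \Sigma(T_{i+1}) \setminus \Sigma(T_i)$. The NFA transitions $q_i \cdot a$ are then specified so that (a) the longest simple path in $\A_n$ has $n$ states, giving depth $n-1$; (b)~$T_i \cdot x_i = T_{i+1}$ for each $i$, realising the chain; (c)~every other transition from $T_i$ goes to some $T_j$ with $j \geq i$, so that the DFA reachability is linear through all $2^n$ states; (d)~all $T_i$ are pairwise distinguishable, making the subset DFA minimal. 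The $n$-PT property is then confirmed via Fact~\ref{mainProperty} by checking $\sim_n \subseteq \sim_{L(\A_n)}$, and the non-$(n-1)$-PT property is witnessed by an explicit pair of $\sim_{n-1}$-equivalent words arriving at distinct $T_i$, generalising the $ab,ba$ witness from the base case.

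The main obstacle will be step~(c) above: because the subset construction forms unions $T \cdot a = \bigcup_{q \in T} q \cdot a$, a single NFA transition moves many subsets simultaneously, and ensuring that no off-chain transition ever lands on a subset earlier in the chain requires a careful balance of self-loops and forward transitions at each $q_i$. The $n=2$ case is essentially unconstrained, but already for $n=3$ the enumeration of subsets and the assignment of transition letters must be chosen very deliberately (with self-loop alphabets $\emptyset, \{a\}, \{b\}, \{a,b\}, \{c\}, \{a,c\}, \{b,c\}, \{a,b,c\}$ in this order, for example), and I expect the heart of the proof to lie in verifying this consistency uniformly in $n$. Once (a)--(d) are in place, steps (i) counting depth, (ii) invoking subset construction, and (iii) applying Fact~\ref{mainProperty} are routine.
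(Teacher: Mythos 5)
Your base case $n=2$ is correct and complete: $a^+b^*$ indeed has a depth-$1$ two-state NFA, a four-state minimal DFA forming a chain of length $3$, and is $2$-PT but not $1$-PT. However, for general $n$ what you give is a programme, not a proof, and the missing parts are exactly the hard parts. You never specify a concrete transition function for $\A_n$; you do not verify that the subset construction produces $2^n$ reachable, pairwise distinguishable states lying on a simple path (you yourself flag step~(c) as the unresolved "heart of the proof"); you do not prove that the language is $n$-PT (saying it is "confirmed via Fact~\ref{mainProperty} by checking $\sim_n\subseteq\sim_{L(\A_n)}$" merely restates the definition); and you give no explicit $\sim_{n-1}$-equivalent pair of words separated by the language for general $n$. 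Each of these is a genuine verification burden: in the paper's construction (states $0,\dots,n-1$, letters $a_0,\dots,a_{n-1}$, self-loops $i\cdot a_j=i$ for $i>j$ and $i\cdot a_i=\{0,\dots,i-1\}$) the exponential chain is obtained by induction via the word $w_\ell=w_{\ell-1}a_\ell w_{\ell-1}$, which doubles the path (states $Q_i\cup\{\ell\}$ followed by $Q_i$) and simultaneously yields distinguishability; the $n$-PT upper bound is a separate induction exploiting that $a_{n-1}$ occurs at most once in accepted runs; and the non-$(n-1)$-PT lower bound requires proving, again by induction, that $w_{n-1}$ and its prefix obtained by deleting the last letter are $\sim_{n-1}$-equivalent yet only one is accepted. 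None of these arguments, or substitutes for them, appear in your proposal.

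A smaller but real defect in the plan itself: your stated requirement that consecutive self-loop alphabets $\Sigma(T_i),\Sigma(T_{i+1})$ "differ by the addition of some letter" cannot mean strict containment along the whole chain (a chain of $2^n$ subsets of an $n$-letter alphabet cannot be increasing), and your own $n=3$ ordering $\emptyset,\{a\},\{b\},\{a,b\},\dots$ violates it; the weaker reading ($\Sigma(T_{i+1})\not\subseteq\Sigma(T_i)$) is what you need, but then nothing in the proposal shows such an ordering is realizable by the dynamics $T\cdot a=\bigcup_{q\in T}q\cdot a$ of a single $n$-state NFA while keeping the language piecewise testable and exactly $n$-PT. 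So the statement is true and your target construction is of the same general shape as the paper's, but as written the proof has a gap precisely where the work lies.
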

  \begin{proof}
    For every $k\ge 0$, we define the NFA
    \[
      \A_k=(\{0,1,\ldots,k\},\{a_0,a_1,\ldots,a_k\},\cdot,I_k,\{0\})
    \]
    with $I_k = \{0,1,\ldots,k\}$ and the transition function $\cdot$ consisting of the self-loops under $a_i$ in all states $j > i$ and transitions under $a_i$ from the state $i$ to all states $j < i$. Formally, $i\cdot a_j = i$ if $k\ge i > j \ge 0$ and $i\cdot a_i = \{0,1,\ldots,i-1\}$ if $k\ge i\ge 1$. Automata $\A_2$ and $\A_3$ are shown in Fig.~\ref{fig6}. Note that $\A_k$ is an extension of $\A_{k-1}$, in particular, $L(\A_{k-1})\subseteq L(\A_k)$.
    \begin{figure}[ht]
      \centering
      \begin{tikzpicture}[baseline,->,>=stealth,shorten >=1pt,node distance=1.6cm,
        state/.style={circle,minimum size=1mm,very thin,draw=black,initial text=},
        every node/.style={fill=white,font=\small},
        bigloop/.style={shift={(0,0.01)},text width=1.6cm,align=center}]
        \node[state,initial below,accepting]    (1) {$0$};
        \node[state,initial below]              (4) [left of=1] {$1$};
        \node[state,initial]                    (5) [left of=4] {$2$};
        \path
          (4) edge node {$a_1$} (1)
          (4) edge[loop above] node[bigloop] {$a_0$} (4)
          (5) edge[loop above] node[bigloop] {$a_0,a_1$} (5)
          (5) edge[bend right=60] node {$a_2$} (1)
          (5) edge node {$a_2$} (4);
      \end{tikzpicture}
      \qquad
      \begin{tikzpicture}[baseline,->,>=stealth,shorten >=1pt,node distance=1.6cm,
        state/.style={circle,minimum size=1mm,very thin,draw=black,initial text=},
        every node/.style={fill=white,font=\small},
        bigloop/.style={shift={(0,0.01)},text width=1.6cm,align=center}]
        \node[state,initial below,accepting]    (0) {$0$};
        \node[state,initial below]              (1) [left of=0] {$1$};
        \node[state,initial below]              (2) [left of=1] {$2$};
        \node[state,initial]                    (3) [left of=2] {$3$};
        \path
          (3) edge node {$a_3$} (2)
          (2) edge node {$a_2$} (1)
          (1) edge node {$a_1$} (0)
          (3) edge[bend right=55] node {$a_3$} (1)
          (3) edge[bend right=65] node {$a_3$} (0)
          (2) edge[bend right=55] node {$a_2$} (0)
          (3) edge[loop above] node[bigloop] {$a_0, a_1, a_2$} (3)
          (2) edge[loop above] node[bigloop] {$a_0, a_1$} (2)
          (1) edge[loop above] node[bigloop] {$a_0$} (1);
      \end{tikzpicture}
      \caption{Automata $\A_2$ and $\A_3$.}
      \label{fig6}
    \end{figure}
    
    We define the word $w_k$ inductively by $w_0=a_0$ and $w_{\ell}=w_{\ell-1}a_{\ell}w_{\ell-1}$, for $0 < \ell \le k$. Note that $|w_\ell|=2^{\ell+1}-1$. In~\cite{mfcs2014ex}, we have shown that every prefix of $w_k$ of odd length ends with $a_0$ and, thus, does not belong to $L(\A_k)$, while every prefix of even length belongs to $L(\A_k)$. For convenience, we briefly recall the proof here. The empty word belongs to $L(\A_0)\subseteq L(\A_k)$. Let $v$ be a prefix of $w_k$ of even length. If $|v| < 2^k-1$, then $v$ is a prefix of $w_{k-1}$ and, by the induction hypothesis, $v\in L(\A_{k-1})\subseteq L(\A_k)$. If $|v| > 2^k-1$, then $v = w_{k-1} a_k v'$. The definition of $\A_k$ and the induction hypothesis then yield that there is a path $k\xrightarrow{w_{k-1}} k\xrightarrow{a_k}\,(k-1) \xrightarrow{v'} 0$. Thus, $v$ belongs to $L(\A_k)$.

    We now discuss the depth of the minimal DFA recognizing the language $L(\A_k)$.
    \begin{claim}
      For every $k\ge 0$, the depth of the minimal DFA recognizing the language $L(\A_k)$ is $2^{k+1}-1$.
    \end{claim}
    \begin{proof}
      We prove the claim by induction on $k$. For $k=0$, the minimal DFA $\fun{det}(\A_0)=(\{\{0\},\emptyset\},\{a_0\},\cdot,\{0\},\{0\})$ obtained from $\A_0$ by the standard subset construction and minimization has two states, accepts the single word $\eps$, and $a_0$ goes from the initial state $I_0 =\{0\}$ to the sink state $\emptyset$. Thus, it has depth $1$ as required. Consider the word $w_k=w_{k-1}a_kw_{k-1}$ for $k>0$. By the induction hypothesis, there exists a simple path of length $2^{k}-1$ in $\fun{det}(\A_{k-1})$ defined by the word $w_{k-1}$ starting from the initial state $I_k = \{0,1,\ldots,k-1\}$ and ending in the state $\emptyset$. Let $Q_0,Q_1,\ldots, Q_{2^k-1}$ denote the states of that simple path in the order they appear on the path, that is, $Q_0= I_k$, $Q_{2^k-1} = \emptyset$, and $Q_i\subseteq Q_0$ for $i=1,2,\ldots,2^k-1$. Note that the states are pairwise non-equivalent by the induction hypothesis. Let $w_{k-1,i}$ denote the $i$-th letter of the word $w_{k-1}$. Then the path
      \begin{align*}
        \underbrace{(Q_0\cup\{k\})\xrightarrow{w_{k-1,1}} (Q_1\cup\{k\})
        \xrightarrow{w_{k-1,2}} (Q_2\cup\{k\})
        \xrightarrow{\ \ldots\ } (Q_{2^k-1}\cup\{k\})}_{w_{k-1}}
        \xrightarrow{a_k}
        \underbrace{Q_0\xrightarrow{w_{k-1,1}} Q_1
        \xrightarrow{w_{k-1,2}} Q_2
        \xrightarrow{\ \ldots\ } Q_{2^k-1}}_{w_{k-1}}
      \end{align*}
      consists of $2^{k+1}$ different states. We show that these states are pairwise non-equivalent. Since the letter $a_k$ is accepted from every state $Q_j\cup\{k\}$, but from no state $Q_i$, for $0\le i,j\le 2^k-1$, the state $Q_j\cup\{k\}$ is distinguishable from the state $Q_i$. Moreover, $Q\cup\{k\}$ and $Q'\cup\{k\}$ are distinguished by the same word as the states $Q$ and $Q'$, that are distinguishable by the induction hypothesis. Thus, we have a simple path of length $2^{k+1}-1$ as required.
      \unskip\hfill$\diamond$
    \end{proof}

    We now show that $\A_k$ defines a $(k+1)$-PT language that is not $k$-PT.
    \begin{claim}
      For every $k\ge 0$, the language $L(\A_k)$ is $(k+1)$-PT.
    \end{claim}
    \begin{proof}
      By induction on $k$. For $k=0$, the language $L(\A_0)=\{\eps\}=\cap_{a\in\Sigma} \overline{L_a}$ is indeed 1-PT. Consider the automaton $\A_k$ and let $u$ and $v$ be two words such that $u \sim_{k+1} v$. Assume that $u\in L(\A_k)$. We show that $v\in L(\A_k)$ as well. 
      If $u$ does not contain the letter $a_k$, then $u\in L(\A_{k-1})$ and, since $u \sim_{k+1} v$ implies that $u \sim_k v$, the induction hypothesis gives that $v\in L(\A_{k-1})\subseteq L(\A_k)$. 
      If $u$ contains the letter $a_k$, the definition of $\A_k$ gives that $u$ is of the form $u = u_1 a_k u_2$, where $u_1u_2$ does not contain the letter $a_k$. Since $u\sim_{k+1} v$, the word $v$ is also of a form $v = v_1 a_k v_2$, where $v_1v_2$ does not contain the letter $a_k$. However, $u_2 \sim_k v_2$, since $w\in \fun{sub}_k(u_2)$ if and only if $a_kw \in \fun{sub}_{k+1}(u_1a_ku_2)=\fun{sub}_{k+1}(v_1a_kv_2)$, which is if and only if $w\in \fun{sub}_k(v_2)$. Since, by the induction hypothesis, $u_2\in L(\A_{k-1})$ implies that $v_2\in L(\A_{k-1})$, we obtain that $v\in L(\A_k)$.
      \hfill$\diamond$
    \end{proof}
    
    \begin{claim}
      For every $k\ge 0$, the language $L(\A_k)$ is not $k$-PT. 
    \end{claim}
    \begin{proof}
      Let $w_k = w_{k-1} a_k w_{k-1}$ be the word defined above. Let $w_k'$ denote the prefix of $w_k$ without the last letter (which is $a_0$), that is, $w_k = w_k' a_0$. We now show, by induction on $k$, that $w_k \sim_k w_k'$. This then implies that the language $L(\A_k)$ is not $k$-PT, because $w_k'$ belongs to $L(\A_k)$ while $w_k$ does not belong to $L(\A_k)$. Indeed, for $k=0$, we have $w_0 = a_0 \sim_0 \eps = w_0'$. Thus, assume that $w_k \sim_k w_k'$ for some $k\ge 0$, and consider a word $w\in \fun{sub}_{k+1}(w_k a_{k+1} w_k)$. Then the word $w$ can be decomposed to $w = w'w''$, where $w'$ is the maximal prefix of $w$ that can be embedded into the word $w_ka_{k+1}$. Note that $w''$ is a suffix of $w$ that can be embedded into $w_k$. Since $|w'|>0$, we have that $|w''|\le k$. By the induction hypothesis, $w''\in \fun{sub}_k(w_k)=\fun{sub}_k(w_k')$. Thus, $w=w'w''\in \fun{sub}_{k+1}(w_ka_{k+1}w_k')$, which proves that $w_{k+1} \sim_{k+1} w_{k+1}'$.
      \hfill$\diamond$
    \end{proof}

    To finish the proof of Theorem~\ref{thmEXP}, note that every NFA $\A_k$ has depth $k$, accepts a $(k+1)$-PT language that is not $k$-PT and its minimal DFA has depth $2^{k+1}-1$. This completes the proof.
  \qed\end{proof}

  Although it is well known that DFAs can be exponentially larger than NFAs, an interesting by-product of this result is that there are NFAs such that all the exponential number of states of their minimal DFAs form a simple path.

  It could seem that NFAs are more convenient to provide upper bounds on the $k$. However, the following simple example demonstrates that even for 1-PT languages, the depth of an NFA depends on the size of the input alphabet. Specifically, for any alphabet $\Sigma$, the language $L=\bigcap_{a\in\Sigma} L_a$ of all words containing all letters of $\Sigma$ is a $1$-PT language such that any NFA recognizing it requires at least $2^{|\Sigma|}$ states and has depth $|\Sigma|$. A deeper investigation in this direction is provided in the next section.

  \begin{example}\label{exSimple}
    Let $L=\bigcap_{a\in\Sigma} L_a$ be a language of all words that contain all letters of the alphabet. Then $2^{|\Sigma|}$ states are sufficient for an NFA to recognize $L$. Indeed, the automaton $\A=(2^\Sigma,\Sigma,\cdot,\{\emptyset\},\{\Sigma\})$ with the transition function defined by $X\cdot a = X \cup \{a\}$, for $X \subseteq \Sigma$ and $a \in \Sigma$, recognizes $L$. The depth of $\A$ is $|\Sigma|$, since every non-self-loop transition goes to a strict superset of the current state.
    
    To prove that every NFA requires at least $2^{|\Sigma|}$ states, we use a fooling set lower-bound technique~\cite{bi92}. A set of pairs of words $\{(x_1,y_1),(x_2,y_2),\ldots,(x_n,y_n)\}$ is a fooling set for $L$ if, for all $i$, the words $x_iy_i$ belong to $L$ and, for $i\neq j$, at least one of the words $x_iy_j$ and $x_jy_i$ does not belong to $L$. To construct such a fooling set, for any $X\subseteq \Sigma$, we fix a word $w_X$ such that $\alp(w_X)=X$. Let $S=\{(w_X,w_{\Sigma\setminus X}) \mid X\subseteq \Sigma\}$. Then $\alp(w_Xw_{\Sigma\setminus X}) =\Sigma$ and $w_Xw_{\Sigma\setminus X}$ belongs to $L$. On the other hand, for $X\neq Y$, either $X\cup (\Sigma\setminus Y)$ or $Y\cup (\Sigma\setminus X)$ is different from $\Sigma$, which implies that $S$ is a fooling set of size $2^{|\Sigma|}$. The main result of~\cite{bi92} now implies the claim.
    It remains to prove that the depth is at least $|\Sigma|$. However, the shortest words of $L$ are of length $|\Sigma|$, which completes the proof.

    Note that if we consider union instead of intersection, the resulting minimal DFA has only $2$ states and depth $1$. 
  \end{example}

\section{Tight Bounds on the Depth of Minimal DFAs}
  If a PT language is recognized by a minimal DFA of depth $\ell$, then it is $\ell$-PT. However, the opposite implication does not hold and the analysis of Section~\ref{secDepth} shows that the language can be $(\ell-i)$-PT for exponentially large $i$'s. Therefore, we study the opposite implication of the relationship between $k$-piecewise testability and the depth of the minimal DFA in this section. Specifically, given a $k$-PT language over an $n$-letter alphabet, we show that the depth of the minimal DFA recognizing it is at most $\binom{k+n}{k} - 1$.

  To this end, we first investigate the following problem. 
  \begin{problem}\label{problem1}
    Let $\Sigma$ be an alphabet of cardinality $n\ge 1$ and let $k\ge 1$. What is the length of a longest word, $w$, such that
      $\fun{sub}_k(w) = \Sigma^{\le k} = \{ v \in\Sigma^* \mid |v| \le k\}$ and,
      for any two distinct prefixes $w_1$ and $w_2$ of $w$, $\fun{sub}_k(w_1)\neq \fun{sub}_k(w_2)$?
  \end{problem}
  
  The answer to this question is formulated in the following proposition proved below by two lemmas.
  \begin{proposition}\label{propSolution}
    Let $\Sigma$ be an alphabet of cardinality $n$. The length of a longest word, $w$, satisfying the requirements of Problem~\ref{problem1} is given by the recursive formula 
    $
      |w| = P_{k,n} = P_{k-1,n} + P_{k,n-1} + 1,
    $
    where $P_{1,m} = m = P_{m,1}$, for $m\ge 1$.
  \end{proposition}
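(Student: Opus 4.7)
The plan is to prove $|w| = P_{k,n}$ by induction on $k+n$, establishing the lower bound via an explicit recursive construction and the upper bound via a \emph{stronger} statement: any word over an $m$-letter alphabet with pairwise distinct $\fun{sub}_k$ on prefixes has length at most $P_{k,m}$, even without requiring $\fun{sub}_k(w)=\Sigma^{\le k}$. The uniform statement is what makes the recursion close. The base cases $k=1$ (every letter must be new, so $|w|\le m$) and $m=1$ (with $w=a^{\ell}$, the chain of $\fun{sub}_k$-sets grows strictly only while $\ell\le k$) are immediate.

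For the construction, I would take $w = v\, a_n\, u$, where by induction $v$ is a word of length $P_{k,n-1}$ over $\Sigma\setminus\{a_n\}$ satisfying Problem~\ref{problem1} for $(k,n-1)$, and $u$ is a word of length $P_{k-1,n}$ over $\Sigma$ satisfying the analogous conditions for $(k-1,n)$. Its length is $P_{k,n-1}+1+P_{k-1,n}=P_{k,n}$. Saturation $\fun{sub}_k(w)=\Sigma^{\le k}$ follows by splitting any target subsequence at its first $a_n$-letter and embedding the two halves in $v$ and $u$. Prefix-distinctness is inherited inside $v$, the middle $a_n$ introduces the length-one subsequence $a_n$, and for each position $i$ inside $u$ a witness $t\in\fun{sub}_{k-1}(u_{\le i})\setminus\fun{sub}_{k-1}(u_{\le i-1})$ yields a new subsequence $a_n t$: any earlier embedding would have to match its leading $a_n$ either to the middle $a_n$ or to some $a_n$ inside $u_{\le i-1}$, both forcing $t$ to embed in $u_{\le i-1}$, contradicting its novelty.

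For the upper bound, let $w$ be over $\Sigma$ of size $m$ with pairwise distinct prefixes in $\fun{sub}_k$. If $\alp(w)\subsetneq\Sigma$ the induction hypothesis on the smaller alphabet closes the case. Otherwise let $a$ be the letter of $\Sigma$ whose first occurrence in $w$ is latest and write $w = u\, a\, v$ with $a\notin\alp(u)$ and $\alp(u)=\Sigma\setminus\{a\}$. Distinct-prefix-ness restricts to $u$, so the induction hypothesis on the smaller alphabet gives $|u|\le P_{k,m-1}$. The crux is the subclaim that $v$ has pairwise distinct prefixes in $\fun{sub}_{k-1}$; granted this, the induction hypothesis on smaller $k$ gives $|v|\le P_{k-1,m}$, whence $|w| = |u|+1+|v| \le P_{k,m-1}+1+P_{k-1,m} = P_{k,m}$.

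The main obstacle is establishing the subclaim, which I would prove by contraposition. Assume $\fun{sub}_{k-1}(v_{\le j}) = \fun{sub}_{k-1}(v_{\le j+1})$, where $c$ is the $(j+1)$-th letter of $v$, and deduce $\fun{sub}_k(u a v_{\le j}) = \fun{sub}_k(u a v_{\le j+1})$. The hypothesis forces $c\in\alp(v_{\le j})$ (else $c$ itself would be a new length-one subsequence at that step) and $tc\in\fun{sub}_{k-1}(v_{\le j})$ for every $t\in\fun{sub}_{k-2}(v_{\le j})$. For any $s\in\fun{sub}_{k-1}(u a v_{\le j})$ I fix an embedding and split $s = s_1 s_2$ with $s_1$ placed in $u a$ and $s_2$ in $v_{\le j}$. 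If $|s_2|\le k-2$, the hypothesis applied to $s_2$ embeds $s_2 c$ in $v_{\le j}$, so $sc$ embeds in $u a v_{\le j}$. If $|s_2|=k-1$, forcing $s_1=\eps$, I shift the first letter of $s$ into $u a$ — possible because $\alp(u a)=\Sigma$ — and apply the hypothesis to the length-$(k-2)$ remaining suffix concatenated with $c$. If $s_2=\eps$, the occurrence of $c$ in $v_{\le j}$ furnishes the final letter. In every case $sc\in\fun{sub}_k(u a v_{\le j})$, contradicting that the prefixes of $w$ are distinct in $\fun{sub}_k$.
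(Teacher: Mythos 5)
Your proof is correct and takes essentially the same route as the paper: the same recursive construction $W_{k,n-1}\,a_n\,W_{k-1,n}$ for the lower bound and the same decomposition $w = u\,a\,v$ at the first occurrence of the last-appearing letter for the upper bound, yielding $|u|\le P_{k,n-1}$ and $|v|\le P_{k-1,n}$. The only difference is that you make explicit what the paper leaves terse---strengthening the induction hypothesis to drop the saturation requirement (needed since $u$ need not saturate) and proving in detail that $v$ has pairwise distinct prefixes under $\fun{sub}_{k-1}$---which is a tightening of the same argument rather than a different approach.
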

  
  Equivalently stated, Problem~\ref{problem1} asks what is the depth of the {\em $\sim_k$-canonical DFA}, whose states correspond to $\sim_k$ classes, that is, of a DFA $\A = (Q,\Sigma,\cdot,[\eps],F)$, where $Q=\left\{ [w] \mid w\in\Sigma^{\le k}\right\}$, $[w]=\{ w' \mid w'\sim_k w\}$, and the transition function $\cdot$ is defined so that, for a state $[w]$ and a letter $a$, $[w] \cdot a = [wa]$. The set of accepting states $F$ is not important here, but will be used later.

  We show below that the solution to this problem is given by the following recursive formula:
  \[
    |w| = P_{k,n} = P_{k-1,n} + P_{k,n-1} + 1\,,
  \]
  where $P_{1,m} = m = P_{m,1}$, for any $m\ge 1$.
  
  The following lemma shows that $w$ is not longer than $P_{k,n}$.
  \begin{lemma}\label{lowerbound}
    Let $k$ and $n$ be given, and let $w'$ be any word over an $n$-letter alphabet satisfying the requirements of Problem~\ref{problem1}. Then $|w'| \le P_{k,n}$.
  \end{lemma}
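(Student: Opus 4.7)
The plan is to prove, by induction on $k+n$, the stronger statement that \emph{any} word $w$ over an alphabet of size $n$ whose prefixes are pairwise $\sim_k$-distinct satisfies $|w|\le P_{k,n}$; applying this to $w'$ yields the lemma. The base cases are direct. For $k=1$, each new letter of $w$ must strictly enlarge $\alp(\cdot)$, so $|w|\le n=P_{1,n}$. For $n=1$, writing $w=a^j$, we need $j\le k=P_{k,1}$ to keep the prefixes $\eps,a,\ldots,a^j$ pairwise $\sim_k$-distinct.

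For the inductive step with $k,n\ge 2$, we may assume $\alp(w)=\Sigma$ (otherwise the induction hypothesis applied to the smaller alphabet gives $|w|\le P_{k,n-1}\le P_{k,n}$). Let $a\in\Sigma$ be the letter whose first occurrence in $w=a_1\cdots a_L$ is at the largest position~$i^*$, and split $w=u'av$ with $u'=a_1\cdots a_{i^*-1}$ and $v=a_{i^*+1}\cdots a_L$. Then $u'\in(\Sigma\setminus\{a\})^*$. Since $\sim_k$ on words in $(\Sigma\setminus\{a\})^*$ is the same whether the ambient alphabet is $\Sigma$ or $\Sigma\setminus\{a\}$, the prefixes of $u'$ are pairwise $\sim_k$-distinct over the smaller alphabet and the induction hypothesis gives $|u'|\le P_{k,n-1}$. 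Set $u:=u'a=a_1\cdots a_{i^*}$; by the choice of $i^*$, $\alp(u)=\Sigma$.

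The heart of the argument is the claim that the prefixes of $v$ are pairwise $\sim_{k-1}$-distinct. Granted this, the induction hypothesis yields $|v|\le P_{k-1,n}$ and therefore
\[
  |w| \;=\; |u'| + 1 + |v| \;\le\; P_{k,n-1} + 1 + P_{k-1,n} \;=\; P_{k,n}.
\]
I prove the claim by contrapositive: suppose $v_j\sim_{k-1}v_{j'}$ for some $j<j'$, where $v_i$ denotes the length-$i$ prefix of $v$. I will derive $uv_j\sim_k uv_{j'}$, which contradicts the pairwise $\sim_k$-distinctness of the prefixes of $w$. The inclusion $\fun{sub}_k(uv_j)\subseteq\fun{sub}_k(uv_{j'})$ is trivial. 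For the reverse, take $s\in\fun{sub}_k(uv_{j'})$, fix any embedding of $s$ into $uv_{j'}$, and split $s=\alpha\beta$ according to whether each position of $s$ is mapped into $u$ or into $v_{j'}$, so $\alpha\preccurlyeq u$ and $\beta\preccurlyeq v_{j'}$. If $|\beta|\le k-1$, then $v_j\sim_{k-1}v_{j'}$ yields $\beta\preccurlyeq v_j$, whence $s\preccurlyeq uv_j$. If $|\beta|=k$, then $\alpha=\eps$ and $s=\beta=b_1b_2\cdots b_k$; the length-$(k-1)$ suffix $b_2\cdots b_k$ embeds into $v_{j'}$ and hence into $v_j$, while $b_1$ embeds into $u$ because $\alp(u)=\Sigma$, yielding $s\preccurlyeq uv_j$.

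The main obstacle is the case $|\beta|=k$ of the subword-inclusion argument, and its resolution is precisely what dictates the choice of~$i^*$: the splitting letter $a$ must be chosen so that the initial segment $u$ already contains every letter of~$\Sigma$, for otherwise there would be no way to recover the ``missing'' first letter $b_1$ of a length-$k$ subword from the prefix before $v$.
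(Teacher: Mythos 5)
Your proof is correct and follows essentially the same route as the paper's: both split $w$ at the first occurrence of the letter that appears last, bound the left factor by $P_{k,n-1}$ over the reduced alphabet, and bound the right factor by $P_{k-1,n}$. You are in fact more careful than the paper in two places it glosses over: you strengthen the induction hypothesis to require only pairwise $\sim_k$-distinct prefixes (necessary, since the left factor need not contain all subwords of length at most $k$), and you give a complete argument, via the $|\beta|=k$ case, that the prefixes of the right factor are pairwise $\sim_{k-1}$-distinct.
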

  \begin{proof}
    Let $w'$ be a word over $\Sigma=\{a_1,a_2,\ldots,a_n\}$ with the order $a_i<a_j$ if $i<j$ induced by the occurrence of $a$ in $w'$. For instance, $abadca$ induces the order $a<b<d<c$. Let $z$ denote the first occurrence of $a_n$ in $w'$. Then $w' = w_1 z w_2$, where $w_1$ is a word over $\{a_1,a_2,\ldots,a_{n-1}\}$ satisfying the second requirement of Problem~\ref{problem1}, hence $|w_1|\le P_{k,n-1}$. On the other hand, since $\alp(w_1z)=\Sigma$, any prefix of $w_2$ extends the set of subwords with a subword of length at least 2. Thus, $w_2$ cannot be longer than the longest word over $\Sigma$ containing all subwords up to length $k-1$, that is, $|w_2| \le P_{k-1,n}$. This completes the proof.
  \qed\end{proof}
  
  We now show that there exists a word of length $P_{k,n}$.
  \begin{lemma}\label{upperbound}
    For any positive integers $k$ and $n$, there exists a word $w$ of length $P_{k,n}$ satisfying the requirements of Problem~\ref{problem1}.
  \end{lemma}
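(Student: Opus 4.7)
I would prove the lemma by induction on $k+n$, with the explicit construction
\[
  w_{k,n} \;=\; w_{k,n-1}\cdot a_n\cdot w_{k-1,n},
\]
where $w_{k,n-1}$ is a word over $\Sigma\setminus\{a_n\}=\{a_1,\ldots,a_{n-1}\}$ supplied by induction for parameters $(k,n-1)$, and $w_{k-1,n}$ is a word over $\Sigma$ supplied by induction for parameters $(k-1,n)$. The length is then $P_{k,n-1}+1+P_{k-1,n}=P_{k,n}$ by the recurrence. The base cases $n=1$ and $k=1$ are handled by the words $a_1^k$ and $a_1a_2\cdots a_n$ respectively; both requirements are immediate.

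\textbf{Completeness of subwords.} To check $\fun{sub}_k(w_{k,n})=\Sigma^{\le k}$, take any $v\in\Sigma^{\le k}$. If $a_n\notin\alp(v)$, then $v$ is a word of length $\le k$ over $\{a_1,\ldots,a_{n-1}\}$, hence by the inductive hypothesis on $w_{k,n-1}$ we have $v\preccurlyeq w_{k,n-1}\preccurlyeq w_{k,n}$. Otherwise write $v=u\,a_n\,v'$ where $u$ is the (possibly empty) prefix of $v$ preceding its first $a_n$. Then $u$ is a word of length $\le k-1$ over $\{a_1,\ldots,a_{n-1}\}$, so $u\preccurlyeq w_{k,n-1}$ by induction; the $a_n$ in $v$ embeds into the middle letter of $w_{k,n}$; and $v'\in\Sigma^{\le k-1}$ embeds into $w_{k-1,n}$ by induction. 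Concatenating these embeddings shows $v\preccurlyeq w_{k,n}$.

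\textbf{Distinctness of prefix classes.} Every prefix of $w_{k,n}$ is either (i) a prefix of $w_{k,n-1}$, or (ii) of the form $w_{k,n-1}\,a_n\,u$ for some (possibly empty) prefix $u$ of $w_{k-1,n}$. Prefixes of type (i) are pairwise distinguished by induction; a prefix of type (i) is distinguished from every prefix of type (ii) because the latter contains $a_n$ as a subword while the former does not. The crux is to distinguish two prefixes $w_{k,n-1}\,a_n\,u$ and $w_{k,n-1}\,a_n\,u'$ of type (ii) with $u\neq u'$; assume without loss of generality that $u$ is a proper prefix of $u'$. By the inductive hypothesis applied to $w_{k-1,n}$, there is some $x\in\Sigma^{\le k-1}$ with $x\in\fun{sub}_{k-1}(u')\setminus\fun{sub}_{k-1}(u)$. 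Then $a_nx$ witnesses the distinction: clearly $a_nx\in\fun{sub}_k(w_{k,n-1}\,a_n\,u')$; conversely, since $a_n\notin\alp(w_{k,n-1})$, any embedding of $a_nx$ into $w_{k,n-1}\,a_n\,u$ is forced to match the $a_n$ of $a_nx$ to the middle letter, so $x$ would have to embed into $u$, contradicting the choice of $x$.

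\textbf{Expected obstacle.} The only delicate point is the last one: we need to show that the ``tail'' prefixes $w_{k,n-1}\,a_n\,u$ remain pairwise non-equivalent under $\sim_k$ even though the inductive hypothesis on $w_{k-1,n}$ only gives pairwise non-equivalence under $\sim_{k-1}$. The fact that $w_{k,n-1}$ contains no $a_n$ is exactly what promotes a $\sim_{k-1}$-distinguishing subword $x$ to a $\sim_k$-distinguishing subword $a_nx$, because the unique $a_n$ available in the prefix before $u$ forces the remaining $x$ to embed into $u$. Once this is handled cleanly, the three verifications above complete the induction and give $w=w_{k,n}$ of the required length $P_{k,n}$.
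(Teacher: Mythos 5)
Your proposal is correct and is essentially the paper's own proof: the same construction $w_{k,n}=w_{k,n-1}\,a_n\,w_{k-1,n}$ with the same base cases, and the same key step of promoting a subword $x\in\fun{sub}_{k-1}(u')\setminus\fun{sub}_{k-1}(u)$ to the $\sim_k$-distinguisher $a_nx$. One minor inaccuracy in your ``delicate point'': since $u$ is a prefix of $w_{k-1,n}$, it may itself contain $a_n$, so the leading $a_n$ of $a_nx$ is not forced onto the middle letter -- but if it matches an occurrence inside $u$, then $x$ still embeds into a suffix of $u$, so the contradiction with $x\notin\fun{sub}_{k-1}(u)$ goes through unchanged.
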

  \begin{proof}
    Let $\Sigma_n$ denote the alphabet $\{a_1,a_2,\ldots, a_n\}$ with the order $a_i < a_j$ if $i<j$. For $n=1$ and $k\ge 1$, the word $W_{k,1} = a^k$ is of length $P_{k,1}$ and satisfies the requirements, as well as the word $W_{1,n} = a_1a_2\ldots a_n$ of length $P_{1,n}$ for $k=1$ and $n\ge 1$. Assume that we have constructed the words $W_{i,j}$ of length $P_{i,j}$ for all $i < k$ and $j < n$, $W_{i,n}$ of length $P_{i,n}$ for all $i < k$, and $W_{k,j}$ of length $P_{k,j}$ for all $j < n$. We construct the word $W_{k,n}$ of length $P_{k,n}$ over $\Sigma_{n}$ as follows:
    \[
      W_{k,n} = W_{k,n-1}\, a_{n}\, W_{k-1,n}\,.
    \]
    
    It remains to show that $W_{k,n}$ satisfies the requirements of Problem~\ref{problem1}. However, the set of subwords of $W_{k-1,n}$ is $\Sigma_n^{\le k-1}$. Since $\alp(W_{k,n-1}a_{n})=\Sigma$, we obtain that the set of subwords of $W_{k,n}$ is $\Sigma_n^{\le k}$.
    
    Let $w_1$ and $w_2$ be two different prefixes of $W_{k,n}$. Without loss of generality, we may assume that $w_1$ is a prefix of $w_2$. If they are both prefixes of $W_{k,n-1}$, the second requirement of Problem~\ref{problem1} follows by induction. If $w_1$ is a prefix of $W_{k,n-1}$ and $w_2$ contains $a_n$, then the second requirement of Problem~\ref{problem1} is satisfied, because $w_1$ does not contain $a_n$. Thus, assume that both $w_1$ and $w_2$ contain $a_n$, that is, they both contain $W_{k,n-1} a_n$ as a prefix. Let $w_1=W_{k,n-1}a_n w_1'$ and $w_2=W_{k,n-1}a_n w_1'w_2'$. Since, by induction, $\fun{sub}_{k-1}(w_1')\subsetneq\fun{sub}_{k-1}(w_1'w_2')$, there exists $v \in \fun{sub}_{k-1}(w_1'w_2') \setminus \fun{sub}_{k-1}(w_1')$. Then $a_n v$ belongs to $\fun{sub}_k(w_2)$, but not to $\fun{sub}_k(w_1)$, which completes the proof.
  \qed\end{proof}

  It follows by induction that for any positive integers $k$ and $n$
  \begin{align}\label{def-pkn}
      P_{k,n} = \binom{k+n}{k} - 1\,.
  \end{align}

  We now use this result to show that the depth of the minimal DFA recognizing a $k$-PT language over an $n$-letter alphabet is $P_{k,n}$ in the worst case.
  \begin{theorem}\label{tightbound}
    For any natural numbers $k$ and $n$, the depth of the minimal DFA recognizing a $k$-PT language over an $n$-letter alphabet is at most $P_{k,n}$. Moreover, the bound is tight for any $k$ and $n$.
  \end{theorem}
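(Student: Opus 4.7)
My plan splits into the $P_{k,n}$ upper bound and the matching lower bound.

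For the upper bound, I invoke Fact~\ref{mainProperty}: if $L$ is $k$-PT then $\sim_k \subseteq \sim_L$, so the minimal DFA of $L$ is a quotient of the $\sim_k$-canonical DFA $\A_k$ (states are $\sim_k$-classes, transitions $[w]\cdot a=[wa]$). A quotient DFA cannot increase the depth, since a simple path of length $\ell$ in the quotient lifts to a path of length $\ell$ in the original whose states are pairwise distinct (their images in the quotient are). Hence it suffices to show $\fun{depth}(\A_k)=P_{k,n}$. Reading a letter in $\A_k$ either fixes the current state or strictly enlarges $\fun{sub}_k$, so $\A_k$ is partially ordered with unique minimum $[\eps]$ and unique maximum $[W_{k,n}]$, and a longest simple path corresponds exactly to a word satisfying both requirements of Problem~\ref{problem1}; by Proposition~\ref{propSolution} such a word has length $P_{k,n}$, giving the desired bound.

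For tightness, I aim to exhibit a $k$-PT language over $\Sigma_n$ whose minimal DFA realizes depth $P_{k,n}$. The natural candidate is $L=\{w\in\Sigma_n^*:\fun{sub}_k(w)=\Sigma_n^{\le k}\}$, which is the single $\sim_k$-class $[W_{k,n}]$ and hence a boolean combination of the languages $L_v$ with $v\in\Sigma_n^{\le k}$, i.e., $k$-PT. Writing $u_i$ for the length-$i$ prefix of $W_{k,n}$, Lemma~\ref{upperbound} supplies $P_{k,n}+1$ pairwise distinct $\sim_k$-classes $C_0=[\eps],C_1=[u_1],\ldots,C_{P_{k,n}}=[W_{k,n}]=L$ along the path read by $W_{k,n}$. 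I would prove that these $C_i$ stay pairwise Myhill-inequivalent with respect to $L$; once established, the same simple path survives in the minimal DFA and the depth is at least $P_{k,n}$. To distinguish $C_i$ from $C_j$ for $i<j$, I would use the suffix of $W_{k,n}$ starting at position $j+1$, which sends $C_j$ to the accepting sink $C_{P_{k,n}}$, and show that applying it at $C_i$ falls short of the sink because skipping the block from position $i+1$ up to $j$ loses at least one subword of length at most $k$.

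The key obstacle is making this skipping argument rigorous, i.e., exhibiting which length-$\le k$ subword is irrevocably lost. My plan is an induction on $(k,n)$ mirroring the decomposition $W_{k,n}=W_{k,n-1}\,a_n\,W_{k-1,n}$ from Lemma~\ref{upperbound}, treating the subcases according to whether the skipped range lies inside the left factor, inside the right factor, or straddles the marked occurrence of $a_n$. Should this direct analysis prove delicate, a fallback is to choose accepting states $F$ inside the state set of $\A_k$ by a pair-separation procedure, keeping $L$ a union of $\sim_k$-classes (hence $k$-PT) while forcing every two classes on the spine to be Myhill-distinguished.
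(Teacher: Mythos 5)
Your upper-bound half is essentially the paper's argument (canonical $\sim_k$-DFA has depth $P_{k,n}$ by Problem~\ref{problem1}, and passing to the minimal DFA cannot increase the depth), and it is fine. The tightness half, however, has a genuine gap: the ``natural candidate'' $L=\{w:\fun{sub}_k(w)=\Sigma^{\le k}\}$ does not work, and the key claim you would need for it is false. Take $k=n=2$, so $W_{2,2}=a_1a_1a_2a_1a_2$ with spine classes $C_1=[a_1]$ and $C_2=[a_1a_1]$. For any $z$, one checks that $a_1z\in L$ iff $z$ contains $a_2a_1$ and $a_2a_2$ as subwords, and $a_1a_1z\in L$ under exactly the same condition (the extra requirement ``$z$ contains $a_1$'' for completing $a_1a_1$ is implied by $a_2a_1$); hence $a_1\sim_L a_1a_1$ and the minimal DFA of $L$ merges $C_1$ and $C_2$, so the spine does not survive. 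Your proposed distinguishing suffix fails concretely here: deleting the second letter of $W_{2,2}$ yields $a_1a_2a_1a_2$, whose set of subwords of length $\le 2$ is still full, so no subword is ``irrevocably lost''. In fact the minimal DFA of this $L$ has depth $4<5=P_{2,2}$, so the candidate language cannot be rescued by exhibiting a different long path either.

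Your fallback---choosing the accepting set $F$ inside the canonical DFA so as to separate all spine classes---is exactly where the real content lies, and you leave it unproved; note it is not obvious that pairwise separations can be enforced simultaneously. The paper resolves this with one clean choice plus a structural argument: take $F=\{[w']\mid w'\text{ an even-length prefix of }W_{k,n}\}$, so acceptance alternates along the spine (the language is still a union of $\sim_k$-classes, hence $k$-PT by Fact~\ref{mainProperty}). If two spine states were Myhill-equivalent they would have the same acceptance status, and since a state of opposite status lies strictly between them, merging them would create a nontrivial cycle in the minimal DFA, contradicting the fact that minimal DFAs of PT languages are partially ordered (Fact~\ref{thm:characterization}). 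Without this alternation-plus-acyclicity idea (or a worked-out substitute), your tightness argument is incomplete.
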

  \begin{proof}
    Let $L_{k,n}$ be a $k$-PT language over an $n$-letter alphabet. Since $L_{k,n}$ is a finite union of $\sim_k$ classes~\cite{Simon1972}, there exists $F$ such that the $\sim_k$-canonical DFA $\A = (Q,\Sigma,\cdot,[\eps],F)$ recognizes $L_{k,n}$. The depth of $\A$ is $P_{k,n}$. Let $\fun{min}(\A)$ denote the minimal DFA obtained from $\A$ by a standard minimization procedure. Since the minimization does not increase the depth, the depth of $\fun{min}(\A)$ is at most~$P_{k,n}$.
    
    To show that the bound is tight, let $w$ denote a fixed word of length $P_{k,n}$, which exists by Lemma~\ref{upperbound}. Consider the $\sim_k$-canonical DFA $\A' = (Q,\Sigma,\cdot,[\eps],F)$, where $F=\{ [w'] \mid w' \text{ is a prefix of } w \text{ of even length}\}$. Then $w$ defines a path $\pi_w=[\eps]\xrightarrow{w_1} [w_1] \xrightarrow{w_2} [w_2] \ldots \xrightarrow{w} [w]$ in $\A'$ of length $P_{k,n}$, where $w_i$ denotes the prefix of $w$ of length $i$ and accepting and non-accepting states alternate. Again, let $\fun{min}(\A')$ denote the minimal DFA obtained from $\A'$. If there were two equivalent states in $\pi_w$, then they must be of the same acceptance status. However, between any two states with the same acceptance status, there exists a state with the opposite acceptance status. Therefore, joining the two states creates a cycle in $\fun{min}(\A')$, which is a contradiction with Fact~\ref{thm:characterization}, since the DFA $\A'$ recognizes a PT language.
  \qed\end{proof}

  A few of these numbers are listed in Table~\ref{table1}.
  \begin{table}[b]
  \begin{center}
    \begin{tabular}{|c||r|r|r|r|r|r|}
      \hline
      \diagbox{k}{n}
              & n=1 & n=2 & n=3 & n=4 & n=5 & n=6 \\\hline\hline
          k=1 & 1   & 2   & 3   & 4   & 5   & 6\\\hline
          k=2 & 2   & 5   & 9   & 14  & 20  & 27\\\hline
          k=3 & 3   & 9   & 19  & 34  & 55  & 83\\\hline
          k=4 & 4   & 14  & 34  & 69  & 125 & 209\\\hline
          k=5 & 5   & 20  & 55  & 125 & 251 & 461\\\hline
          k=6 & 6   & 27  & 83  & 209 & 461 & 923\\\hline
    \end{tabular}
  \end{center}
  \caption{The table of a few first numbers $P_{k,n}$}
  \label{table1}
  \end{table}
  We now present several consequences of these results.
  
  \begin{enumerate}
    \item Note that it follows from the formula that $P_{k,n} = P_{n,k}$. This gives and interesting observation that increasing the length of the considered subwords has exactly the same effect as increasing the size of the alphabet.
    
    \item Equivalently stated, Problem~\ref{problem1} asks what is the depth of the $\sim_k$-canonical DFA, whose states are $\sim_k$ classes. The number of equivalence classes of $\sim_k$, i.e., the number of states, has recently been investigated in~\cite{Karandikar2015}.
  
    \item It provides a precise bound on the length of $w_1$ of Theorem~\ref{thmconp}. However, it does not improve the statement of the theorem.
  \end{enumerate}
  
  To provide a relationship of $P_{k,n}$ with Stirling cyclic numbers, the following can be shown.
  \begin{proposition}
    For positive integers $k$ and $n$, $P_{k,n} = \frac{1}{k!}\sum_{i=1}^k {k+1 \brack i+1}n^i$, where ${k \brack n}$ denotes the Stirling cyclic numbers.
  \end{proposition}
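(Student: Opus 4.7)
The plan is to derive the identity directly from the closed form $P_{k,n}=\binom{k+n}{k}-1$ established in equation~\eqref{def-pkn}, together with the standard generating-function identity for unsigned Stirling numbers of the first kind, namely
\[
  x(x+1)(x+2)\cdots(x+k) \;=\; \sum_{i=0}^{k+1} {k+1 \brack i} x^{i}.
\]
This identity is a classical combinatorial fact (both sides count permutations of $k{+}1$ items by number of cycles, weighted by $x$) and it is exactly what is needed to convert a rising factorial into a polynomial whose coefficients are Stirling cyclic numbers.

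First I would substitute $x=n$ and observe that the left-hand side equals $n\cdot(n+1)(n+2)\cdots(n+k) = n\cdot k!\,\binom{k+n}{k}$. Next I would exploit the boundary value ${k+1 \brack 0}=0$ (no permutation of a nonempty set has zero cycles) so that the $i=0$ term on the right vanishes; dividing by $n$ then yields
\[
  k!\,\binom{k+n}{k} \;=\; \sum_{i=1}^{k+1} {k+1 \brack i} n^{\,i-1} \;=\; \sum_{j=0}^{k} {k+1 \brack j+1} n^{\,j}.
\]
Finally, I would split off the $j=0$ summand, using the second boundary value ${k+1 \brack 1} = k!$ (a permutation of $k{+}1$ items with a single cycle is a cyclic arrangement, of which there are $k!$), to obtain
\[
  k!\,\binom{k+n}{k} - k! \;=\; \sum_{j=1}^{k} {k+1 \brack j+1} n^{\,j}.
\]
Dividing by $k!$ and invoking $P_{k,n}=\binom{k+n}{k}-1$ produces exactly the stated formula.

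No step here should present a real obstacle: once the rising-factorial/Stirling generating identity is invoked with the right argument, the rest is bookkeeping around the two boundary values ${k+1 \brack 0}=0$ and ${k+1 \brack 1}=k!$. The only subtlety worth checking carefully is the reindexing $i \mapsto j = i-1$ and the fact that it is legitimate to ``divide by $n$'' (which is really just discarding the vanishing $i=0$ term, so the resulting polynomial identity in $n$ is valid for all integers $n\ge 1$, not merely a numerical coincidence).
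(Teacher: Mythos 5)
Your proposal is correct and follows essentially the same route as the paper: both rest on the rising-factorial identity $\sum_{i}{k+1 \brack i}x^i = x(x+1)\cdots(x+k)$ evaluated at $x=n$, the boundary values ${k+1 \brack 0}=0$ and ${k+1 \brack 1}=k!$, and the closed form $P_{k,n}=\binom{k+n}{k}-1$; the paper merely runs the same chain of equalities in the opposite direction (from the Stirling sum toward $P_{k,n}$).
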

  \begin{proof}
    To prove this, we first recall the following well-known properties of Stirling cyclic numbers.
    \begin{align}\label{eq-stirling-sum}
      {k+1 \brack 1} = k! & & \text{ and } &&  \sum_{i=0}^{k} {k \brack i} x^i = x(x+1)\cdots (x+k-1) = \frac{(x+k-1)!}{(x-1)!}
    \end{align}
    Now we prove the claim.
    \begin{align*} 
      \frac{1}{k!}\sum\limits_{i=1}^k {k+1 \brack i+1}n^i 
        &= \frac{1}{nk!}\sum\limits_{i=1}^k {k+1 \brack i+1}n^{i+1} \\
        & \textrm{(multiplication by $n/n$)} \\
        &= \frac{1}{nk!}\sum\limits_{i=2}^{k+1} {k+1 \brack i}n^{i} \\
        & \textrm{(changing indexes)} \\
        &= \frac{1}{nk!}\left(\sum\limits_{i=0}^{k+1} {k+1 \brack i}n^{i}  - {k+1 \brack 1}n\right)\\
        & \textrm{(adding the cases $i=0,1$ into the sum)} \\
        &= \frac{1}{nk!}\left(\frac{(k+n)!}{(n-1)!} - k!n\right)\\
        & \textrm{(by Equation~\ref{eq-stirling-sum})} \\
        &= \frac{(k+n)!}{n!k!} - 1 \\
        &= P_{k,n}\\  
        & \textrm{(by Equation~\ref{def-pkn})}
    \end{align*}
    This completes the proof.
  \qed\end{proof}

  Finally, note that one could also see a noticeable relation between the columns (resp. rows) of Table~\ref{table1} and the generalized Catalan numbers of~\cite{frey2001}. We leave the details of this correspondence for a future investigation.

\paragraph*{Acknowledgements.}
  We thank an anonymous reviewer for informing us about the unpublished manuscript~\cite{KKP} and its authors for providing it. It shows that the $k$-PT problem is co-NP-complete for $k\ge 4$. It also provides a smaller bound on the length of the witnesses, which results in a single exponential algorithm to find the minimal $k$.
  
  The authors are grateful to Sebastian Rudolph for a fruitful discussion.

\end{document}